\tolerance = 10000
\documentclass[aps,pre,amsmath,amssymb]{revtex4-2}

\usepackage{polski}
\usepackage[english]{babel}
\usepackage{bm}
\usepackage{graphicx}
\usepackage{amsmath}
\usepackage{mathrsfs}
\usepackage{amsthm}

\newtheorem{lemma}{Lemma}

\begin{document}

\newcommand{\uu}[1]{\underline{#1}}
\newcommand{\pp}[1]{\phantom{#1}}
\newcommand{\be}{\begin{eqnarray}}
\newcommand{\ee}{\end{eqnarray}}
\newcommand{\ve}{\varepsilon}
\newcommand{\vp}{\varphi}
\newcommand{\vs}{\varsigma}
\newcommand{\Tr}{{\,\rm Tr\,}}
\newcommand{\Trr}{{\,\rm Tr}}
\newcommand{\pol}{\frac{1}{2}}
\newcommand{\sgn}{{\rm sgn}}
\newcommand{\Mo}{\mho}
\newcommand{\Om}{\Omega}

\title{On the Relativity of Quantumness as Implied by Relativity of Arithmetic and Probability}
\author{Marek Czachor}
\affiliation{
Instytut Fizyki i Informatyki Stosowanej,
Politechnika Gdańska, 80-233 Gdańsk, Poland
}

\begin{abstract}A hierarchical structure of isomorphic arithmetics is defined by a bijection $g_\mathbb{R}:\mathbb{R}\to \mathbb{R}$. It  entails a hierarchy of probabilistic models, with probabilities $p_k=g^k(p)$, where $g$ is the restriction of $g_\mathbb{R}$ to the interval $[0,1]$, $g^k$ is the $k$th iterate of $g$, and $k$ is an arbitrary integer (positive, negative, or zero; $g^0(x)=x$). The relation between $p$ and $g^k(p)$, $k>0$, is analogous to the one between probability and neural activation function.  For \mbox{$k\ll -1$}, $g^k(p)$ is essentially white noise (all processes are equally probable). The choice of $k=0$ is physically as arbitrary as the choice of origin of a line in space, 
hence what we regard as experimental binary probabilities, $p_{\rm exp}$, can be given by any $k$,  $p_{\rm exp}=g^k(p)$. 
Quantum binary probabilities are defined by $g(p)=\sin^2\frac{\pi}{2}p$. With this concrete form of $g$, one finds that any two neighboring levels of the hierarchy are related to each other in a quantum--subquantum  relation. In this sense, any model in the hierarchy is probabilistically quantum in appropriate arithmetic and calculus. And the other way around: any model is subquantum in appropriate arithmetic and calculus.
Probabilities involving more than two events are constructed by means of trees of binary conditional probabilities.  We discuss from this perspective singlet-state probabilities and Bell inequalities. We find that singlet state probabilities involve simultaneously three levels of the hierarchy: quantum, hidden, and macroscopic. As a by-product of the analysis, we discover a new (arithmetic) interpretation of the Fubini--Study geodesic distance. 
 \end{abstract}

\maketitle

\section{Introduction}

In brief, the~quantum measurement problem consists of finding a rule that correlates states of a quantum system with those of a macroscopic observer. When phrased in  probabilistic terms, the~problem is to find a consistent rule of replacing  joint probabilities, $p(a,b)$,  by~conditional probabilities, $p(a|b)$, where $a$ and $b$ represent states (or properties) of the system and the observer, respectively. 
In standard quantum mechanics the rule can be inferred from Bayes law by the following sequence of equivalences:
\be
p(a|b)=\frac{p(a,b)}{p(b)}
=
\frac{\Tr(\rho P_bP_aP_b)}{\Tr(\rho P_b)}
=
\Tr\left(\frac{P_b\rho P_b}{\Tr(P_b\rho P_b)}P_a\right)
=
\Tr(\rho_bP_a).\label{(1)}
\ee
Thus,  the~process of conditioning by the event ``$b$ has occurred'' can be represented by the ``state vector reduction'',
\be
\rho\mapsto  \rho_b=\frac{P_b\rho P_b}{\Tr(P_b\rho P_b)}.\label{(2)}
\ee
However, do we really need (\ref{(2)})? From an operational point of view, it is enough if we know the joint probability, 
\be
p(a,b)=\Tr(\rho P_bP_aP_b),
\ee
and the probability of the condition,
\be
p(b)=\Tr(\rho P_b).
\ee
Both numbers are directly related to experimental data, so (\ref{(2)}) is~redundant.

If we try to generalize the above procedure beyond quantum mechanics, various possibilities arise.
 In nonlinear quantum mechanics, for~example, once we obtain $p(a,b)$ and $p(b)$, we can {\emph{deduce}} 
 the mathematical form of an effective state vector reduction, but~it will not coincide with (\ref{(2)}), because~the sequence of transformations (\ref{(1)}) will no longer be true (cf.~\cite{CD} for the details). A~naive combination of (\ref{(2)}) with nonlinear evolution of states implies the inconsistency known as faster-than-light communication~\cite{Gisin,Gisin',Polchinski,Jordan}. Of~course, one {\it can\/} work with the projection postulate even in nonlinear quantum mechanics (eliminating the faster-than-light effect), but~the form of state vector reduction must be first derived in a consistent way from Bayes law~\cite{CD}. 
Here, consistency is the keyword. 

The Bayes law, when written as
$p(a,b)=p(a|b)p(b)$, 
is known as the {\it {product rule}\/}. Jaynes~\cite{Jaynes} (following the ideas of Acz\'el~\cite{Aczel} and Cox~\cite{Cox}) derives the product rule from some very general desiderata of consistent and plausible reasoning but, interestingly, what one finds turns out be more general,
\be
p(a,b)=g^{-1}\Big(g\big(p(a|b)\big)g\big(p(b)\big)\Big),\label{w}
\ee
where $g$ is some monotone non-negative function (cf. Equation~(2.27) in~\cite{Jaynes}). Still, for~Jaynes,  $p(\dots)$ is not yet a probability. His intuition tells him that the probability (or, rather, a~measure of plausibility) is given by $g(p(\dots))$, so that the product rule is reconstructed in the standard form, 
\be
g\big(p(a,b)\big)=g\big(p(a|b)\big)g\big(p(b)\big).
\ee

What we will discuss later on in this paper employs a possibility that was not taken into account by Jaynes. Namely, we will treat formulas such as (\ref{w}) as a definition of a new product, $\odot$, so that
\be
p(a,b)=g^{-1}\Big(g\big(p(a|b)\big)g\big(p(b)\big)\Big)=p(a|b)\odot p(b).\label{ww}
\ee
We will also see that $g(p)$ and its higher iterates  have intriguing similarities to neural activation functions, whereas higher iterates of 
$g^{-1}(p)$ resemble a white~noise.

A new product is an element of a new arithmetic, leading us ultimately to a whole hierarchical structure of such generalized models. As~one of the conclusions, we will find that both $p$ and $g(p)$ may be treated as genuine probabilities, provided $g$ is restricted to the class discussed in detail in Section~\ref{Section2}. One of the possibilities, directly related to the measurement problem, is that $p$ are probabilities at a hidden-variable level, whereas $g(p)$ are the quantum ones. We will see that any two neighboring levels of the hierarchy are related to each other in a way that may be regarded as a form of a quantum--subquantum relationship. This will lead to the idea of {\it {relativity of quantumness}\/}.

In any such generalized and fundamental theory one is necessarily confronted with the chicken-or-egg dilemma: What was first, $p(a,b)$ and $p(b)$, or~$p(a|b)$ and $p(b)$? The Bayes law that defines the conditional probability in terms of the joint probability, or~the product rule that defines the joint probability in terms of the conditional probability? 

An alternative form of the dilemma can be expressed in terms of the projection postulate: Do we first define conditional probabilities in terms of some given form of state vector reduction, or~we begin with joint probabilities and then infer the form of  state vector reduction? In nonlinear quantum mechanics, the latter strategy is superior to the former one. However, in~the Bayesian approach to probability, one updates probabilities on the basis of prior information, so the conditional probabilities are superior to the joint~ones.

The formalism of arithmetic hierarchies discussed in the present paper clearly prefers the Bayesian approach. The~reason is in the three fundamental lemmas, which we will discuss in Section~\ref{Section2}, which are true only for binary probabilities. There is  priority in the binary coding, as~ we have to construct probabilities involving more than two events in terms of binary trees of conditional probabilities. Binary coding becomes as fundamental for probability theory as the two-spinors are fundamental for relativistic physics~\cite{PR}.

We begin in Section~\ref{Section2} by recalling the three fundamental lemmas about the functional equation $g(p)+g(1-p)=1$. 
In Section~\ref{Section3}, we construct a hierarchy of isomorphic arithmetics associated with $g(p)$. The~hierarchy of arithmetics leads to a hierarchy of probabilities introduced in Section~\ref{Section4}. A~hierarchical ordering relation, briefly discussed in Section~\ref{Section5}, will allow us to unambiguously employ symbols such as $<$ and $>$. A~family of product rules, discussed in Section~\ref{Bayes}, is employed in the problem of hidden-variables representation of singlet-state probabilities in Section~\ref{Section7}. We explain, in~particular, that one encounters here three types of arithmetic levels in a single formula for joint probabilities: quantum,  macroscopic, and~ hidden. Section~\ref{Section8} introduces some elements of hierarchical calculi, with~special emphasis on non-Newtonian integration. We make here a digression on R\'enyi's entropy which is implicitly based on a generalized arithmetic, but~does not take advantage of the possibilities inherent in generalized calculus. Section~\ref{Section9} is devoted to  local hidden-variable models of singlet-state probabilities constructed in terms of the generalized calculus. This seems to be the most controversial aspect of the formalism, as~it clearly contradicts common wisdom about Bell's theorem. 
Section~\ref{Fubini} brings us to the intriguing role played in quantum mechanics by the geodesic distance in the projective space of quantum states. A~typical discussion of the Fubini--Study metric is restricted in the literature to its geometric interpretation. Here, we reveal its unknown aspect: Its role for the arithmetic structure of quantum states. It seems that $g(p)=\sin^2\frac{\pi}{2}p$ is a fundamental bijection that determines the arithmetic of the subquantum world. In~Section~\ref{SecFinal}, we give a simple argument explaining why the effective number of distinguishable probabilistic levels of the hierarchy is finite. We also point out a possible interpretation of the hierarchy of probabilities in terms of neural activation functions. 
 At such a formal level, the~only means of relating formal probabilities to experiment is via the laws of large numbers, discussed in  Section~\ref{Section11}. 
In Section~\ref{Section12}, we return to the problem of Bell's inequalities. We depart here a little from the formalism we developed in a series of earlier papers where the same arithmetic was used at the hidden and the macroscopic levels. Our current understanding of the problem is that it is better to employ the freedom of combining different arithmetics simultaneously. We end the paper with remarks on open problems, Section~\ref{Section13}, and~certain personal perspective is given in Section~\ref{Section14}. The~{Appendix} 
 \ref{appa} is devoted to certain technicalities which cannot be found in the~literature.

\section{Three Fundamental~Lemmas}
\label{Section2}

The hierarchical structure of (binary) probabilities is a consequence of the following three lemmas.
They do not have a sufficiently nontrivial generalization beyond the binary case (cf. the discussion in~\cite{MCEntropy}), hence the non-binary case has to be treated in terms of trees of conditional probabilities constructed in analogy to binary {Huffman codes}~\cite{Huffman}. 
\begin{lemma}

$g:[0,1]\to [0,1]$ is a solution of the functional equation
$g(p)+g(1-p)=1$  if and only if  
\be
g(p)=\frac{1}{2} + h\left(p-\frac{1}{2}\right),\label{6}
\ee
where $h(-x)=-h(x)$, $h:[-1/2,1/2]\to[-1/2,1/2]$, i.e.,~$h$ is an {\it arbitrary\/} odd mapping of the closed interval into itself. Any such $g$ has a fixed point at $p=1/2$.
\end{lemma}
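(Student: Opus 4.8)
The plan is to prove the equivalence by treating it as a change of variables that recenters the functional equation at its natural symmetry point $p=1/2$. First I would establish the ``if'' direction: assuming $g(p)=\tfrac12+h(p-\tfrac12)$ with $h$ odd on $[-1/2,1/2]$, I substitute directly into $g(p)+g(1-p)$, writing $x=p-\tfrac12$ so that $1-p-\tfrac12=-x$, and compute $g(p)+g(1-p)=\tfrac12+h(x)+\tfrac12+h(-x)=1+h(x)-h(x)=1$, using oddness. I would also check that the codomain condition is consistent — that $h$ mapping $[-1/2,1/2]$ into itself is exactly what makes $g$ map $[0,1]$ into $[0,1]$, since $g(p)=\tfrac12+h(p-\tfrac12)\in[\tfrac12-\tfrac12,\tfrac12+\tfrac12]=[0,1]$.

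For the ``only if'' direction, I would start from an arbitrary solution $g:[0,1]\to[0,1]$ of $g(p)+g(1-p)=1$ and simply \emph{define} $h:[-1/2,1/2]\to\mathbb{R}$ by $h(x)=g(x+\tfrac12)-\tfrac12$; this is forced, since any $g$ can be written in the form \eqref{6} with this choice of $h$. The content is then to verify that this $h$ has the claimed properties. Oddness follows by evaluating the functional equation at $p=x+\tfrac12$: we get $g(x+\tfrac12)+g(\tfrac12-x)=1$, i.e.\ $\big(h(x)+\tfrac12\big)+\big(h(-x)+\tfrac12\big)=1$, hence $h(-x)=-h(x)$. That $h$ maps $[-1/2,1/2]$ into $[-1/2,1/2]$ is immediate from $g\in[0,1]$: $h(x)=g(x+\tfrac12)-\tfrac12\in[-\tfrac12,\tfrac12]$. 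So both directions reduce to one-line substitutions once the right change of variable is written down.

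Finally, the fixed-point claim is a corollary of oddness: $h(0)=-h(0)$ forces $h(0)=0$, so $g(1/2)=\tfrac12+h(0)=\tfrac12$. Equivalently one can get it straight from the functional equation at $p=1/2$: $2g(1/2)=1$. I would remark that no regularity (continuity, monotonicity) is needed anywhere, which is why the lemma stresses that $h$ is \emph{arbitrary} — this arbitrariness is what the later hierarchy exploits, and it is worth flagging that the subsequent constructions will impose extra conditions (monotonicity, the specific choice $g(p)=\sin^2\tfrac{\pi}{2}p$) that are not part of this lemma.

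Honestly, there is no real obstacle here: the only ``trick'' is recognizing that shifting coordinates by $1/2$ turns the reflection $p\mapsto 1-p$ into the reflection $x\mapsto -x$, after which the functional equation becomes the definition of an odd function. The main thing to be careful about is bookkeeping of the domain/codomain endpoints so that the bijections $[0,1]\leftrightarrow[-1/2,1/2]$ match up exactly, and stating clearly that $h$ is both well-defined by and uniquely determined by $g$, so that \eqref{6} is genuinely a parametrization of \emph{all} solutions rather than merely a sufficient condition.
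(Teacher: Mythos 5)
Your proof is correct and is essentially the standard argument one would expect (the paper itself only cites external references for the proof rather than giving one inline): centering at $p=1/2$ turns the reflection $p\mapsto 1-p$ into $x\mapsto -x$, after which the functional equation is precisely the oddness of $h$, and the domain/codomain bookkeeping and the fixed-point corollary follow immediately. Nothing is missing.
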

\begin{lemma}
Consider two functions $g_j:[0,1]\to [0,1]$, $j=1,2$, that satisfy assumptions of Lemma~1,
\be
g_j(p)=\frac{1}{2} + h_j\left(p-\frac{1}{2}\right),
\ee
where $h_j(-x)=-h_j(x)$. 
Then $g_{12}=g_1\circ g_2$ also satisfies Lemma~1 with $h_{12}=h_1\circ h_2$,
\be
g_{12}(p)=\frac{1}{2} + h_{12}\left(p-\frac{1}{2}\right).
\ee
Accordingly,
\be 
g_{12}(p)+g_{12}(1-p)=1
\ee
for any $p\in[0,1]$.
\end{lemma}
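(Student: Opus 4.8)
The plan is to obtain the claim by a direct substitution that reduces everything to Lemma~1, since the family of admissible functions is manifestly closed under composition. First I would write the composition explicitly using the representation granted by Lemma~1 for each factor: because $g_2(p)=\frac12+h_2(p-\frac12)$, we have $g_2(p)-\frac12=h_2(p-\frac12)$, and inserting $q=g_2(p)$ into $g_1(q)=\frac12+h_1(q-\frac12)$ gives $g_{12}(p)=g_1\big(g_2(p)\big)=\frac12+h_1\!\big(h_2(p-\frac12)\big)$. This already exhibits $g_{12}$ in the form required by Lemma~1, with $h_{12}=h_1\circ h_2$; the step is legitimate precisely because the hypothesis $g_2:[0,1]\to[0,1]$ guarantees that $g_2(p)$ lies in the domain on which the Lemma~1 representation of $g_1$ was established.

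Next I would verify that $h_{12}$ satisfies the two conditions imposed on the odd function in Lemma~1. Oddness follows immediately from oddness of $h_1$ and $h_2$: $h_{12}(-x)=h_1\big(h_2(-x)\big)=h_1\big(-h_2(x)\big)=-h_1\big(h_2(x)\big)=-h_{12}(x)$. The mapping property $h_{12}:[-1/2,1/2]\to[-1/2,1/2]$ holds because $h_2$ sends the closed interval into itself and then $h_1$ does likewise, so their composite does too. Hence $g_{12}$ meets the hypotheses of Lemma~1, and the direction of that lemma asserting that every $g$ of this form solves the functional equation yields $g_{12}(p)+g_{12}(1-p)=1$ for all $p\in[0,1]$.

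A cleaner alternative, which avoids invoking the explicit representation, is to compose the functional identities directly: since $g_2$ satisfies $g_2(1-p)=1-g_2(p)$, one gets $g_{12}(1-p)=g_1\big(g_2(1-p)\big)=g_1\big(1-g_2(p)\big)$, and applying the identity for $g_1$ at the point $g_2(p)\in[0,1]$ gives $g_1\big(1-g_2(p)\big)=1-g_1\big(g_2(p)\big)=1-g_{12}(p)$, as desired. I do not expect a genuine obstacle here; the only thing demanding a moment's care is the bookkeeping of domains and ranges — ensuring $g_2(p)$ remains in $[0,1]$ so that the properties of $g_1$ may be used at that argument, and that the composed odd function still maps $[-1/2,1/2]$ into itself — both of which are guaranteed by the standing hypothesis that each $g_j$ maps $[0,1]$ into $[0,1]$.
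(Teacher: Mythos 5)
Your proof is correct. The paper itself does not reprove Lemmas~1--3 but defers to the cited references, so there is no in-paper argument to compare against; nonetheless your direct substitution $g_{12}(p)=\tfrac12+h_1\bigl(h_2(p-\tfrac12)\bigr)$, followed by the verification that $h_1\circ h_2$ is odd and maps $[-1/2,1/2]$ into itself, is the natural and evidently intended route, and it establishes both the representation $h_{12}=h_1\circ h_2$ and the functional equation. One small remark on your ``cleaner alternative'': composing the identities $g_j(1-p)=1-g_j(p)$ does prove $g_{12}(p)+g_{12}(1-p)=1$ more economically, but it does not by itself exhibit $h_{12}=h_1\circ h_2$, which the lemma also asserts; so the first argument is the one that actually covers the full claim, and the alternative is best viewed as a shortcut to the functional equation alone.
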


\begin{lemma}
Let $g^{k}=g\circ \dots \circ g$, $g^{-k}=g^{-1}\circ \dots \circ g^{-1}$ ($k$ times), $g^0(x)=x$. If~$g$ satisfies Lemma~1, 
\be
g(p)=\frac{1}{2} + h\left(p-\frac{1}{2}\right),
\ee
then the $k$th iterate  $g^{k}$ also satisfies Lemma~1 for any $k\in\mathbb{Z}$,
\be
g^{k}(p)=\frac{1}{2} + h^k\left(p-\frac{1}{2}\right),
\ee
where $h^k$ is the $k$th iterate of $h$.
Accordingly,
\be 
g^{k}(p)+g^{k}(1-p)=1\label{57a}
\ee
for any $p\in[0,1]$, and~any integer $k$. In~particular
\be 
g^{-1}(p)+g^{-1}(1-p)=1.\label{58}
\ee
The proofs can be {found in}~\cite{MCAPPA2021,MCKN}.

\end{lemma}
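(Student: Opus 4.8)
The plan is to prove Lemma~3 by induction on $|k|$, using Lemmas~1 and~2 as the engine. The base case $k=0$ is immediate, since $g^0(x)=x=\tfrac12+h^0(x-\tfrac12)$ with $h^0=\mathrm{id}$, which is trivially an odd self-map of $[-1/2,1/2]$; and $g^0(p)+g^0(1-p)=p+(1-p)=1$. The case $k=1$ is precisely Lemma~1. For the inductive step with $k>0$, I would write $g^{k+1}=g\circ g^{k}$ and apply Lemma~2 with $g_1=g$ and $g_2=g^{k}$: by the inductive hypothesis $g^{k}$ satisfies Lemma~1 with odd part $h^k$, so Lemma~2 gives that $g^{k+1}$ satisfies Lemma~1 with odd part $h\circ h^{k}=h^{k+1}$, which is exactly the claimed form. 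Equation~(\ref{57a}) for $k+1$ then follows from the final assertion of Lemma~2 (or directly from Lemma~1 applied to $g^{k+1}$).

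The negative-$k$ case is where a little care is needed, and it is the main obstacle: Lemma~2 as stated composes functions satisfying Lemma~1, but one must first check that $g^{-1}$ itself satisfies Lemma~1. The key observation is that if $g(p)=\tfrac12+h(p-\tfrac12)$ with $h$ an odd bijection of $[-1/2,1/2]$ onto itself — which is the case relevant here, since $g$ must be invertible for $g^{-k}$ to make sense — then $h^{-1}$ is also an odd bijection of $[-1/2,1/2]$: oddness of $h^{-1}$ follows because $h(-x)=-h(x)$ implies, setting $y=h(x)$, that $h^{-1}(-y)=-h^{-1}(y)$, and the range/domain statement is automatic from bijectivity. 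A short computation then shows $g^{-1}(q)=\tfrac12+h^{-1}(q-\tfrac12)$: indeed, solving $q=\tfrac12+h(p-\tfrac12)$ for $p$ gives $p-\tfrac12=h^{-1}(q-\tfrac12)$, i.e.\ $p=\tfrac12+h^{-1}(q-\tfrac12)$. Hence $g^{-1}$ satisfies Lemma~1 with odd part $h^{-1}$, and in particular $g^{-1}(p)+g^{-1}(1-p)=1$, which is Equation~(\ref{58}).

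With $g^{-1}$ established as a Lemma~1 function, the induction for $k<0$ runs exactly as before: writing $g^{-(k+1)}=g^{-1}\circ g^{-k}$ and invoking Lemma~2 with $g_1=g^{-1}$, $g_2=g^{-k}$ shows $g^{-(k+1)}$ has odd part $h^{-1}\circ h^{-k}=h^{-(k+1)}$. Combining the two inductions covers all $k\in\mathbb{Z}$. Finally, Equation~(\ref{57a}) in full generality is just Lemma~1 applied to the function $g^{k}$ (which we have now shown satisfies Lemma~1): writing $g^{k}(p)=\tfrac12+h^{k}(p-\tfrac12)$ and using oddness of $h^{k}$,
\be
g^{k}(p)+g^{k}(1-p)=1+h^{k}\!\left(p-\tfrac12\right)+h^{k}\!\left(\tfrac12-p\right)=1.
\ee
The only genuine subtlety, worth flagging explicitly, is the standing assumption that $g$ (equivalently $h$) is a bijection of the relevant interval — without invertibility the symbol $g^{-k}$ is undefined, so this is not a loss of generality but a precondition for the statement to be meaningful; under it, the whole argument is a routine double induction.
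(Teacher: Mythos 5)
Your proof is correct, and since the paper itself defers the proof of Lemma~3 to the external references \cite{MCAPPA2021,MCKN} rather than presenting one, your reconstruction is the natural one to give here: a double induction powered by Lemma~2, with the only non-automatic step being the verification that $g^{-1}$ satisfies Lemma~1 (with odd part $h^{-1}$). You handle that step cleanly, and you are right to flag that the statement implicitly requires $h$ to be an odd \emph{bijection} of $[-1/2,1/2]$ — Lemma~1 as stated allows any odd self-map, but invertibility of $g$ is a standing hypothesis whenever $g^{-k}$, $k>0$, is invoked, so this is a precondition rather than a gap. Nothing is missing.
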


Armed with the lemmas we can construct a hierarchy of arithmetics, entailing a hierarchy of~probabilities.

\section{Hierarchy of Isomorphic~Arithmetics}
\label{Section3}

Assume that $g:[0,1]\to [0,1]$ occurring in the above three lemmas is a restriction of a bijection $g_\mathbb{R}:\mathbb{R}\to \mathbb{R}$, i.e.,~$g(x)=g_\mathbb{R}(x)$ for $x\in[0,1]$. It does not matter what the properties of $g_\mathbb{R}(x)$ are if $x\not\in [0,1]$, except~for the bijectivity of  
$g_\mathbb{R}$. Put differently, $g$ belongs to the equivalence class $[g_\mathbb{R}]$ of bijections whose restrictions to $[0,1]$ are identical. Following the notation of Lemma~3, we denote $g^{k}=g_\mathbb{R}\circ \dots \circ g_\mathbb{R}$, 
$g^{-k}=g^{-1}_\mathbb{R}\circ \dots \circ g^{-1}_\mathbb{R}$, $g^0(x)=x$. Now, let $x,y\in\mathbb{R}$. Define,
\be
x\oplus_k y  &=& g^{k}\Big( g^{-k}(x)+g^{-k}(y)\Big),\label{12}\\
x\ominus_k y  &=& g^{k}\Big( g^{-k}(x)-g^{-k}(y)\Big),\label{13}\\
x\odot_k y  &=& g^{k}\Big( g^{-k}(x)\cdot g^{-k}(y)\Big),\label{14}\\
x\oslash_k y  &=& g^{k}\Big( g^{-k}(x)/g^{-k}(y)\Big).\label{15}
\ee
The arithmetic $\mathbb{R}_k$ is the set $\mathbb{R}$ equipped with the above four operations, i.e., \linebreak  \mbox{$\mathbb{R}_k=\{\mathbb{R},\oplus_k,\ominus_k,\odot_k,\oslash_k\}$}. The~ordering relation is independent of $k$ if $g$ is increasing, which we therefore assume, hence $g^{k}(x)<g^{k}(y)$ if and only if $x<y$. 
The neutral elements of addition, $0_k=g^{k}(0)$, and~multiplication, $1_k=g^{k}(1)$,
\be
x\oplus_k 0_k  =x\odot_k 1_k  = x,\quad \textrm{for any $x$,}
\ee
can be regarded as bits, in~principle applicable to some form of binary coding.
Greater natural numbers are obtained by the $n$-times repeated addition of $1_k$,
\be
n_k
&=&
\underbrace{1_k\oplus_k\dots\oplus_k 1_k}_{\textrm{$n$ times}}=g^{k}(n),\\
n_k\oplus_k m_k
&=&
g^{k}(n+m)=(n+m)_k,\\
n_k\odot_k m_k
&=&
g^{k}(nm)=(nm)_k.
\ee
An $n$th power of $x$,
\be
x^{n_k}
&=&
\underbrace{x\odot_k\dots\odot_k x}_{\textrm{$n$ times}},
\ee
satisfies
\be
x^{n_k}\odot_k x^{m_k}=x^{(n+m)_k}=
x^{n_k\oplus_k m_k}.
\ee
Rational numbers are those of the form
\be
n_k\oslash_k m_k=g^{k}(n/m)=(n/m)_k,\quad \textrm{$n,m\in\mathbb{Z}$.}\label{17''}
\ee
The notion of rationality is arithmetic-dependent. Indeed, let $n/m$ be a rational number in the arithmetic $\mathbb{R}_0=\{\mathbb{R},+,-,\cdot,/\}$. Then, typically,
$g^{k}(n/m)$, $k\neq 0$, is not a rational number in $\mathbb{R}_0$. Still, it is a rational number in the arithmetic $\mathbb{R}_k=\{\mathbb{R},\oplus_k,\ominus_k,\odot_k,\oslash_k\}$ in consequence of (\ref{17''}). 

For any $k,l\in\mathbb{Z}$, the~four arithmetic operations  are related by
\be
x\odot_{k+l} y
&=&
g^{l}\Big(
g^{-l}(x)\odot_k g^{-l}(y)\Big)
=
g^{k}\Big(
g^{-k}(x)\odot_l g^{-k}(y)\Big),\label{16}\\
x\oslash_{k+l} y
&=&
g^{l}\Big(
g^{-l}(x)\oslash_k g^{-l}(y)\Big)
=
g^{k}\Big(
g^{-k}(x)\oslash_l g^{-k}(y)\Big),\label{17}\\
x\oplus_{k+l} y
&=&
g^{l}\Big(
g^{-l}(x)\oplus_k g^{-l}(y)\Big)
=
g^{k}\Big(
g^{-k}(x)\oplus_l g^{-k}(y)\Big),\label{18}\\
x\ominus_{k+l} y
&=&
g^{l}\Big(
g^{-l}(x)\ominus_k g^{-l}(y)\Big)
=
g^{k}\Big(
g^{-k}(x)\ominus_l g^{-k}(y)\Big).\label{19}
\ee
The bijection $f^k=g^{-k}$  is an isomorphism of $\mathbb{R}_{k+l}$ and $\mathbb{R}_l$, for~any $k,l\in\mathbb{Z}$, 
\be
f^{k}
\left(
x\odot_{k+l} y
\right)
&=&
f^{k}(x)\odot_l f^{k}(y),\\
f^{k}
\left(
x\oslash_{k+l} y
\right)
&=&
f^{k}(x)\oslash_l f^{k}(y),\\
f^{k}
\left(
x\oplus_{k+l} y
\right)
&=&
f^{k}(x)\oplus_l f^{k}(y),\\
f^{k}
\left(
x\ominus_{k+l} y
\right)
&=&
f^{k}(x)\ominus_l f^{k}(y).
\ee
The value $l=0$ is not privileged. The~role of a 0th level can be played by any $l$. The~notation~where
\be
\mathbb{R}_l=\{\mathbb{R},\oplus_l, \ominus_l, \odot_l, \oslash_l\}=
\{\mathbb{R},+, -, \cdot, /\},
\ee
is perfectly acceptable, hence  {\it any\/} $\mathbb{R}_l$ can be regarded as ``the'' ordinary arithmetic we are taught at  school. The~latter statement is the content of the ``arithmetic Copernican principle'', {introduced in}
~\cite{MCKN} and discussed further in~\cite{MCAPPA2023}. In~the present paper we nevertheless simplify notation and assume $\mathbb{R}_0=\{\mathbb{R},+, -, \cdot, /\}$. This is analogous to the usual habit of imposing initial conditions in Newtonian dynamics  ``at $t=0$'' instead of a general $t=t_0$. 

The hierarchy of arithmetics leads to the hierarchy of~probabilities.

\section{Hierarchy of~Probabilities}
\label{Section4}

Let $g(1)=1$, so that $1_k=g^{k}(1)=1$ and $0_k=g^{k}(0)=0$, for~any $k$. Now, let $p$, $q$, $p+q=1$, be probabilities.  Assuming that $g$ satisfies the assumptions of  Lemma~1, we find (in consequence of Lemmas 2 and 3, and~$g^{k}(1)=1$ for any $k\in\mathbb{Z}$)
\be
p+q &=&1,\label{28}\\
g^{k}(p)+g^{k}(q) &=&1,\label{29}\\
p\oplus_{-k} q=g^{-k}\left(g^{k}(p)+g^{k}(q)\right) &=&1 ,\label{29''}
\ee
for any $k\in\mathbb{Z}$. The~Copernican aspect is visible at the level of probabilities as well,  if~we define $P=g^{k}(p)$, $Q=g^{k}(q)$, so that
\be
g^{-k}(P)+g^{-k}(Q) &=&1,\label{31}\\
P+Q &=&1,\label{30}\\
P\oplus_{k}Q=g^{k}\left(g^{-k}(P)+g^{-k}(Q)\right) &=&1,\label{31''}
\ee
for any $k\in\mathbb{Z}$. 
Indeed, how to distinguish between (\ref{28})--(\ref{29''}) and (\ref{31})--(\ref{31''}), if~we bear in mind that $k$ can be positive, negative, or~zero, and~the formulas are true for all $k$? How to distinguish between the two levels if in both cases we find $p+q=1$ and $P+Q=1$? 
Which of the probabilities, $p$ or $P$, is the one we measure in experiment? Which iterate, $k$, 0, or~$-k$, is the one that defines {\it our\/} probabilities we experimentally define in terms of frequencies of successes? Which natural numbers $n_k$, $n=n_0$, or~$n_{-k}$, are the ones we use to define numbers of trials and successes?

Formula  (\ref{29''})  shows that probabilities $p$ and $q$ sum to 1 in infinitely many ways, corresponding to infinitely many values of $k$ in $\oplus_{-k}$. Formula 
(\ref{29})  shows that  probabilities $p$ and $q$  generate infinitely many probabilities $p_k=g^{k}(p)$ and  $q_k=g^{k}(q)$ that sum to 1 by means of the same addition $+=\oplus_0$. The Arithmetic Copernican Principle is a relativity principle which states that any value of $k$ can correspond to the arithmetic and probability that we regard as ``the human and experimental~one''.

Still, this is not the end of the story. Replacing in (\ref{29}) $k$ by $k-l$,
\be
g^{k-l}(p)+g^{k-l}(q) &=&1,
\ee
and acting on both sides with $g^l$, we find
\be
g^l\left(g^{k-l}(p)+g^{k-l}(q)\right) =g^k(p)\oplus_l g^k(q)=1,\label{34''}
\ee
for any $k,l\in\mathbb{Z}$. The~resulting wealth of available probability models implied by a single bijection $g$ is truly overwhelming, yet ignored by those who study quantum probabilities and the hidden variables~problem.

Let us now consider the concrete case of the equivalence class of a function $g_\mathbb{R}$ whose restriction to $[0,1]$ is given by $g(x)=\sin^2\frac{\pi}{2}x$. Then,
\be
h(x)
&=&
g\left(x+\frac{1}{2}\right)-\frac{1}{2}
=
\frac{1}{2}\sin \pi x,\quad -\frac{1}{2}\le x\le \frac{1}{2},\\
g(p)
&=&
\frac{1}{2}+h\left(p-\frac{1}{2}\right)
=
\frac{1}{2}+\frac{1}{2}\sin \pi \left(p-\frac{1}{2}\right),\quad 0\le p\le 1,\label{36}
\ee
Let $p=(\pi-\theta)/\pi$ be the probability of finding a point belonging to the overlap of two half-circles rotated by 
$\theta\in[0,\pi]$. Then, for~$k=1$, $q=\theta/\pi$,
\be
P&=&g(p)=g^{k}(p)=\sin^2\frac{\pi}{2}\frac{\pi-\theta}{\pi}=\cos^2\frac{\theta}{2},\label{35''}\\
Q&=&g(q)=g^{k}(q)=\sin^2\frac{\pi}{2}\frac{\theta}{\pi}=\sin^2\frac{\theta}{2},\label{36''}
\ee
in which we recognize the conditional probabilities for two successive measurements of spin-1/2 in two Stern--Gerlach devices placed one after another, with~relative angle $\theta$. 

By Lemma~3, we have in fact  much more, because~$k=1$ can be replaced by any integer. For~example, the~second iterate
\be
P=g^2(p)=g\big(g(p)\big)=\sin^2\frac{\pi}{2}\left(\cos^2\frac{\theta}{2}\right),
\ee
satisfies $g^2(p)+g^2(q)=1$, of~course, as~can be proved by a straightforward but instructive calculation~\cite{MCAPPA2023}. The~minus-first iterate,
\be
P=g^{-1}(p)=\frac{2}{\pi}\arcsin\sqrt{p}
=
\frac{2}{\pi}\arcsin\sqrt{\frac{\pi-\theta}{\pi}},\label{39}
\ee
satisfies $g^{-1}(p)+g^{-1}(q)=1$, and~so on and so~forth.

Clearly, we have absolutely no criterion that could indicate which level of the hierarchy is the one we regard as our human one, a~fact that justifies the adjective ``Copernican''. For~example, rewriting (\ref{39}) as
\be
P&=&g^{1-2}(p)=g^{1}\left(g^{-2}(p)\right)=g^{1}\left(1-g^{-2}(q)\right)=g^{1}\left(1-\frac{\alpha}{\pi}\right)
=
\cos^2\frac{\alpha}{2},
\ee
we find the relation between the two parameters, $\alpha$ and $\theta$, corresponding to the two levels of the hierarchy (see Figure~\ref{Fig1}),
\be
\alpha(\theta)
=
\pi g^{-2}(q)
=
2\arcsin\sqrt{\frac{2}{\pi}\arcsin\sqrt{\frac{\theta}{\pi}}}\label{41}.
\ee

\vspace{-6pt}
\begin{figure}

\includegraphics[width=8 cm]{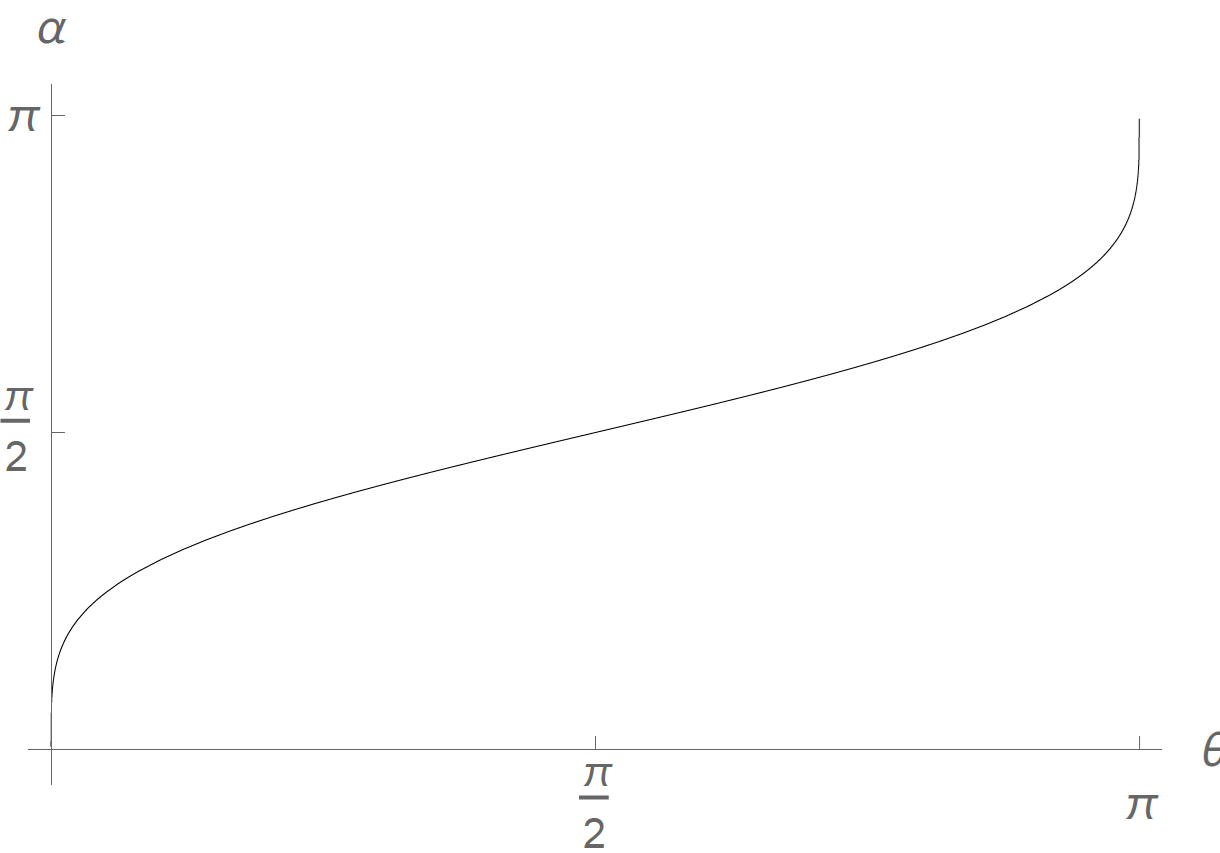}
\caption{The relation between $\alpha$ and $\theta$ as given by (\ref{41}). There are three fixed points: $\alpha(0)=0$, $\alpha(\pi/2)=\pi/2$, $\alpha(\pi)=\pi$. Here, $\alpha$ is the geometric angle between the two Stern--Gerlach devices, whereas $\theta$ is a hidden~parameter.}
\label{Fig1}
\end{figure}   

The usual tests of classicality and quantumness are based on inequalities. However, in~order to discuss an inequality we have to control ordering  relations such as $\le$ and $\ge$. Fortunately, with~our assumptions about $g$ the problem is~trivial.

\section{Hierarchical Ordering~Relation}
\label{Section5}

We assume that the bijection $g$ is strictly increasing, i.e.,~$x<y$ if and only if $g(x)<g(y)$. A~composition of two strictly increasing functions is strictly increasing, hence   
$x<y$ implies $g^k(x)\ominus_l g^k(y)<0_l=0$ for any $k,l\in\mathbb{Z}$. The~latter leads to a unique ordering relation at the level of the entire hierarchy of arithmetics. This is why it is safe to use the symbols $<$, $>$, $\le$, $\ge$ at any level of the~hierarchy.

So far, we have restricted our analysis to binary events. An~extension to higher dimensional problems needs the notion of a product~rule.

\section{Hierarchical Product~Rules}
\label{Bayes}

The standard  product rule states that probability of a sequence of two events, first $a_1$ then $a_2$,  is given by the product of the prior $p(a_1)$ (a probability of the condition) with the posterior $p(a_2|a_1)$ (a conditional probability of $a_2$ under the condition that $a_1$ has happened). The~sums of binary probabilities,
\be
g^{k_1}\big(p(0)\big)\oplus_l g^{k_1}\big(p(1)\big) &=& 1, \quad \textrm{for any $k_1,l\in\mathbb{Z}$},\\
g^{k_2}\big(p(0|a_1)\big)\oplus_l g^{k_2}\big(p(1|a_1)\big) &=& 1, \quad \textrm{for any $k_2,l\in\mathbb{Z}$},
\ee
as implied by the lemmas, are naturally related to
\be
g^{k_2}\big(p(a_2|a_1)\big)\odot_l g^{k_1}\big(p(a_1)\big), \quad \textrm{for any $k_1,k_2,l\in\mathbb{Z}$},\label{42}
\ee
because
\be
\bigoplus_{a_1,a_2}{\!_l}\,g^{k_2}\big(p(a_2|a_1)\big)\odot_l g^{k_1}\big(p(a_1)\big)=1, \quad \textrm{for any $k_1,k_2,l\in\mathbb{Z}$}.\label{58''}
\ee
A sequence of results, $a_n, a_{n-1},\dots, a_1$, implies their joint probability
\be
g^{k_n}\big(p(a_n|a_{n-1}\dots a_1)\big)\odot_l\dots \odot_l g^{k_2}\big(p(a_2|a_1)\big)\odot_l g^{k_1}\big(p(a_1)\big)
\label{59''}
\ee
normalized by
\be
\bigoplus_{a_1\dots a_n}{\!\!\!_l}\,g^{k_n}\big(p(a_n|a_{n-1}\dots a_1)\big)\odot_l\dots \odot_l g^{k_2}\big(p(a_2|a_1)\big)\odot_l g^{k_1}\big(p(a_1)\big)
=g^l(1)=1.\label{60''}
\ee
In particular, for~$l=0$, 
\be
g^{k_1}\big(p(0)\big)+ g^{k_1}\big(p(1)\big) &=& 1, \quad \textrm{for any $k_1\in\mathbb{Z}$},\\
g^{k_2}\big(p(0|a_1)\big)+ g^{k_2}\big(p(1|a_1)\big) &=& 1, \quad \textrm{for any $k_2\in\mathbb{Z}$},
\ee
and
\be
\sum_{a_1,a_2}g^{k_2}\big(p(a_2|a_1)\big)g^{k_1}\big(p(a_1)\big)=1, \quad \textrm{for any $k_1,k_2\in\mathbb{Z}$}.\label{71''''}
\ee
At the other extreme is the case of $l=k_1=k_2=k$,
\be
g^{k}\big(p(a_2|a_1)\big)\odot_k g^{k}\big(p(a_1)\big)
=
g^{k}\big(p(a_2|a_1)p(a_1)\big),
\ee
with normalization
\be
\bigoplus_{a_1,a_2}{\!\!_k}\,g^{k}\big(p(a_2|a_1)p(a_1)\big)=g^{k}\left(\sum_{a_1,a_2}p(a_2|a_1)p(a_1)\right)=1, \quad \textrm{for any $k\in\mathbb{Z}$}.
\ee
It is striking that in formulas such as (\ref{59''})  each of the $k$-indices can be in principle different. In~effect, (\ref{59''}) may be regarded as a component of a~tensor.

A truly nontrivial application of generalized product rules occurs in the problem of singlet-state probabilities, quantum entangled states, and~Bell's~theorem. 

\section{Singlet-State~Probabilities}
\label{Section7}

Singlet-state probabilities occur in experiments where two parties (``Alice'' and ``Bob'') are macroscopically separated, but~the measurements they perform are the quantum ones.  Such probabilities naturally occur in the context of the hierarchical product rule. Indeed, consider the following probabilities,
\be
p(0) &=& p(1)=\frac{1}{2},\label{53}\\
p(0|0) &=& p(1|1)=\frac{\theta}{\pi},\label{53_}\\
p(1|0) &=& p(0|1)=\frac{\pi-\theta}{\pi}\label{53__},
\ee
whose geometric  interpretation is evident. As~the bijection take the one occurring \mbox{in (\ref{36})--(\ref{36''})}. Then,
\be
g\big(p(0|0)\big)g\big(p(0)\big)
&=&
g\big(p(1|1)\big)g\big(p(1)\big)
=
\frac{1}{2}\sin^2\frac{\theta}{2},\\
g\big(p(1|0)\big)g\big(p(0)\big)
&=&
g\big(p(0|1)\big)g\big(p(1)\big)
=
\frac{1}{2}\cos^2\frac{\theta}{2},\label{57}
\ee
are the probabilities typical of the singlet state.
Let us note that we have employed the product rule,
\be
g^{k_2}\big(p(a|b)\big)\odot_l g^{k_1}\big(p(b)\big)
=
g^1\big(p(a|b)\big)\odot_0 g^{k_1}\big(p(b)\big),
\ee
with $k_2\neq l$. $k_1$ can be arbitrary because $g(1/2)=1/2=g^{k_1}(1/2)$ for any $g$ that satisfies Lemma~1. For~simplicity, we set $k_1=1$. Now, the~joint probability can be interpreted as~follows:
\be
P(a,b)=\underbrace{g\big(\overbrace{p(a|b)}^{\textrm{hidden}}\big)}_{\textrm{quantum}}\underbrace{\odot_0}_{\textrm{macroscopic}} \underbrace{g\big(\overbrace{p(b)}^{\textrm{hidden}}\big)}_{\textrm{quantum}}.
\label{P(a,b)}
\ee
Let us further note that we could have started with the following:
\be
g^{k}\big(p(0)\big) &=& g^{k}\big(p(1)\big)=\frac{1}{2},\\
g^{k}\big(p(0|0)\big) &=& g^{k}\big(p(1|1)\big)=\frac{\theta}{\pi},\\
g^{k}\big(p(0|1)\big) &=& g^{k}\big(p(1|0)\big)=\frac{\pi-\theta}{\pi}.
\ee
Then, $g^{k+1}\big(p(a_2|a_1)\big)g^{k+1}\big(p(a_1)\big)$ would be the singlet-state~probabilities.

One concludes that the notion of a quantum level is a relative one. In~fact, any level is quantum, and~any level is hidden; moreover, any $\odot_l$ can play the role of the macroscopic arithmetic. What counts is the neighboring location in the hierarchy. 
The so-called violation of Bell's inequality is an inconsistency that occurs if we apply the arithmetic of a hidden level to calculations performed at the neighboring quantum one. An~analogous inconsistency that occurs between non-neighboring levels leads to violations beyond the Tsirelson bound~\cite{Tsirelson,MCKN}. 

In order to perform calculations at different levels of the hierarchy, we have to understand what  the consequences  are of the hierarchical structure of arithmetics for the resulting hierarchy of~calculi.

\section{Hierarchy of~Calculi}
\label{Section8}

A hierarchy of arithmetics leads to a hierarchy of ``non-Newtonian'' calculi \citep{GK,G79,G79',P,P',GMMP}. Here, functions such as $A:\mathbb{R}\to \mathbb{R}$ have to be treated as mappings between arithmetics and not between sets, hence it is more appropriate to write 
\be
A_{lk}:\mathbb{R}_k\to \mathbb{R}_l,
\ee
with some $k,l\in\mathbb{Z}$. Otherwise the notions of derivative and integral are ambiguous. The~derivative of $A_{lk}$ is 
\be
\frac{{\rm D}_l A_{lk}(x)}{{\rm D}_k x}
&=&
\lim_{\delta\to 0}
\Big(
A_{lk}(x\oplus_k\delta_k)\ominus_l A_{lk}(x)
\Big)
\oslash_l \delta_l.\label{D/Dx}
\ee
As before, $\delta_k=g^k(\delta)$, $\delta_l=g^l(\delta)$. The~derivative is $\mathbb{R}_l$-linear and satisfies 
an appropriate Leibniz rule,
\be
\frac{{\rm D}_l \big(A_{lk}(x)\oplus_l B_{lk}(x)\big)}{{\rm D}_k x}
&=&
\frac{{\rm D}_l A_{lk}(x)}{{\rm D}_k x}
\oplus_l
\frac{{\rm D}_l B_{lk}(x)}{{\rm D}_k x},\\
\frac{{\rm D}_l \big(A_{lk}(x)\odot_l B_{lk}(x)\big)}{{\rm D}_k x}
&=&
\left(
\frac{{\rm D}_l A_{lk}(x)}{{\rm D}_k x}\odot_l B_{lk}(x)
\right)
\oplus_l
\left(
A_{lk}(x)\odot_l
\frac{{\rm D}_l B_{lk}(x)}{{\rm D}_k x}
\right).
\ee
Integration of $A_{lk}:\mathbb{R}_k\to \mathbb{R}_l$ is defined in a way that guarantees the two fundamental theorems of calculus (under standard assumptions about differentiability and  continuity):
\be
\int_a^b
\frac{{\rm D}_l A_{lk}(x)}{{\rm D}_k x} {\rm D}_k x
&=&
A_{lk}(b)\ominus_l A_{lk}(a),\\
\frac{{\rm D}_l }{{\rm D}_k x}
\int_a^x
A_{lk}(y) {\rm D}_k y
&=&
A_{lk}(x).
\ee
The formulas become less abstract if one considers the following commutative diagram ($f=g^{-1}$)
\be
\begin{array}{rcl}
\mathbb{R}_k                & \stackrel{A_{lk}}{\longrightarrow}       & \mathbb{R}_l               \\
f^k{\Big\downarrow}    &                                     & {\Big\uparrow}g^l  \\
\mathbb{R}_0                & \stackrel{A_{00}}{\longrightarrow}   & \mathbb{R}_0\\
g^{n}{\Big\downarrow}    &                                     & {\Big\uparrow}f^{m}  \\
\mathbb{R}_n                & \stackrel{A_{mn}}{\longrightarrow}       & \mathbb{R}_m               
\end{array},
\label{diagramf}
\ee
leading to a very simple and useful form of the derivative (\ref{D/Dx}),
\be
\frac{{\rm D}_l A_{lk}(x)}{{\rm D}_k x}
&=&
g^l
\left(
\frac{{\rm d}A_{00}\big(f^k(x)\big)}{{\rm d} f^k(x)}
\right),\label{derivative''}
\ee
while the integral reads,
\be
\int_a^b
A_{lk}(x) {\rm D}_k x
&=&
g^l
\left(
\int_{f^k(a)}^{f^k(b)}
A_{00}(r) {\rm d}r
\right).\label{integr}
\ee
Here, ${\rm d}r$ denotes the usual (Riemann, Lebesgue, etc.) integral in $\mathbb{R}_0$.
Formula (\ref{derivative''}) is derived under the assumption that $g:\mathbb{R}\to \mathbb{R}$ is continuous (in the usual meaning of the term employed in ordinary ``Newtonian'' real analysis), which is however, automatically guaranteed by the fact that $g$ is a bijection. What is important, neither $g$ nor its inverse $f$ have to be differentiable in the standard Newtonian sense. The~latter makes an important difference with respect to the ordinary differential geometry where functions such as $g(x)=x^{1/3}$ would be excluded as non-differentiable at $x=0$. In~the non-Newtonian formalism, any bijection $g$, as~well as its inverse $f$, are automatically smooth with respect to the non-Newtonian differentiation defined by the same $g$. Various explicit examples can be found in~\cite{BC,Czachor2019}.

Linearity of the integral must be understood in the sense of $\mathbb{R}_l$,
\be
\int_a^b
A_{lk}(x)\oplus_l B_{lk}(x) {\rm D}_k x
&=&
\int_a^b
A_{lk}(x){\rm D}_k x
\oplus_l 
\int_a^b
 B_{lk}(x) {\rm D}_k x,\\
\int_a^b
A_{l}\odot_l B_{lk}(x) {\rm D}_k x
&=&
A_{l}
\odot_l 
\int_a^b
 B_{lk}(x) {\rm D}_k x, \quad \textrm{for a constant $A_{l}\in\mathbb{R}_l$},
\ee
a property of fundamental importance for Bell-type inequalities~\cite{MCKN}. An~analogous form of generalized linearity of integrals occurs in fuzzy calculus~\cite{fuzzy,fuzzy1,fuzzy1',fuzzy1'',Pap2020}.

Diagram (\ref{diagramf}) implies
\be
A_{lk} =g^l\circ A_{00}\circ f^k=g^{l-m}\circ g^m\circ A_{00}\circ f^n\circ f^{k-n}
=g^{l-m}\circ  A_{mn}\circ f^{k-n},
\ee
which leads to a new type of a chain rule, relating derivatives and integrals at different levels of the hierarchy,
\be
\frac{{\rm D}_l A_{lk}(x)}{{\rm D}_k x}
&=&
g^{l-m}
\left(
\frac{{\rm D}_m A_{mn}\left(f^{k-n}(x)\right)}{{\rm D}_n f^{k-n}(x)}
\right),\label{86''}\\
\int_a^b
A_{lk}(x) {\rm D}_k x
&=&
g^{l-m}
\left(
\int_{f^{k-n}(a)}^{f^{k-n}(b)}
A_{mn}(x) {\rm D}_nx
\right).\label{87''}
\ee
Formulas  (\ref{86''}) and (\ref{87''}) do not seem to appear in the literature, so we prove them in \mbox{Appendix~\ref{appa}}.

\subsection*{{Digression: Logarithm and R\'enyi~Entropies}} 
\label{Section8.1}

Exponential function is defined by the differential equation,
\be
\frac{{\rm D}_l \exp_{lk}(x)}{{\rm D}_k x}
&=&
g^l
\left(
\frac{{\rm d}\exp_{00}\big(f^k(x)\big)}{{\rm d} f^k(x)}
\right)=\exp_{lk}(x)
=
g^l
\left(
\exp_{00}\big(f^k(x)\big)
\right),\label{exp}\\
\exp_{lk}(0_k)
&=&
1_l.
\ee
The solution is given by $\exp_{00}(x)=e^x$ and satisfies 
\be
\exp_{lk}(x\oplus_k y)
=
\exp_{lk}(x)
\odot_l
\exp_{lk}(y).
\ee
The inverse is given by
\be
\ln_{kl}(x)
=
g^k
\left(
\ln_{00}\big(f^l(x)\big)
\right),\label{ln}
\ee
where $\ln_{00}(x)=\ln x$, and~\be
\ln_{kl}(x\odot_l y)
=
\ln_{kl}(x)
\oplus_k
\ln_{kl}(y).
\ee
Now, consider $\phi_\alpha (x)=e^{(1-\alpha)x}$, $\phi_\alpha ^{-1}(x)=\frac{1}{1-\alpha}\ln x$.
R\'enyi introduced his $\alpha$-entropy as a Kolmogorov--Nagumo average~\cite{R,K1930,N1930,JA,JA',CN,N2013,JK2020} of the Shannon amount of information~\cite{Shannon} (we prefer the natural logarithm to the original $\log_2$ from~\cite{R}, but~this is just a choice of units of information),
\be
S_\alpha 
&=&
\phi_\alpha ^{-1}
\left(
\sum_p  p\phi_\alpha (-\ln p)
\right)
=
\frac{1}{1-\alpha}\ln \left(\sum_p  p^\alpha\right).
\label{Renyi}
\ee
It is clear that (\ref{Renyi}) can be expressed in several different ways by means of generalized arithmetics.
For example, 
\be
\ominus_{1}
\ln_{1,0}(x)
=
g^{1}
\left(
-\ln\big(f^0(x)\big)
\right)=
g
\left(
-\ln x
\right),
\ee
has the same functional form as $\phi_\alpha \big(-\ln p\big)$. Alternatively, defining
\be
x\oplus y
&=&
\phi_\alpha ^{-1}
\left(
\phi_\alpha (x)+\phi_\alpha (y)
\right),\\
x\odot y
&=&
\phi_\alpha ^{-1}
\left(
\phi_\alpha (x)\phi_\alpha (y)
\right),
\ee
and $\phi_\alpha ^{-1}(p)=P$, we find
\be
S_\alpha 
=
\phi_\alpha ^{-1}
\left(
\sum_P  \phi_\alpha (P)\phi_\alpha \big(-\ln \phi_\alpha (P)\big)
\right)
=
\bigoplus_PP\odot \ln \big(1/\phi_\alpha (P)\big).
\ee
R\'enyi's choice of $\phi_\alpha (x)=e^{(1-\alpha)x}$ was dictated by the assumed additivity of entropy for independent (i.e., uncorrelated) systems. Our general formalism suggests various hierarchical generalizations of the notion of entropy, automatically inheriting the additivity properties from the arithmetics involved.
Some examples can be found in~\cite{MCEntropy}.

\section{Application: Local Hidden-Variable Models Based on Non-Newtonian Integration}
\label{Section9}

Consider an integral representation of the standard $\mathbb{R}_0$-valued probability, with~probability densities $\rho_{00}$ and characteristic functions
\be
\chi_{\varphi, 00}(\lambda)
=
\left\{
\begin{array}{cl}
1 & \textrm{if $\lambda\in[\varphi-\pi/2,\varphi+\pi/2]$}\\
0 & \textrm{if $\lambda\not\in[\varphi-\pi/2,\varphi+\pi/2]$}
\end{array}
\right.
\ee
treated as mappings $\mathbb{R}_0\to \mathbb{R}_0$. For~example, setting  $\theta=\alpha-\beta$ in  (\ref{53}) and (\ref{53_}) one can express the probabilities in integral forms,
\be
\frac{1}{2}
&=&
\int\chi_{\alpha,00}(\lambda)\rho_{00}(\lambda)\,{\rm d}\lambda 
=
\frac{1}{2\pi}\int_{\alpha-\pi/2}^{\alpha+\pi/2}{\rm d}\lambda,\label{75''}\\
\frac{1}{2}\frac{\alpha-\beta}{\pi}
&=&
\int\chi_{\alpha,00}(\lambda)\chi_{\beta+\pi,00}(\lambda)\rho_{00}(\lambda)\,{\rm d}\lambda 
=
\frac{1}{2\pi}\int_{\beta+\pi/2}^{\alpha+\pi/2}{\rm d}\lambda,\quad \beta\le\alpha.\label{76''}
\ee
$\chi_{\varphi, 00}(\lambda)$ is the characteristic function of the half-circle located symmetrically with respect to the angle $\varphi$; $\rho_{00}(\lambda)=1/(2\pi)$ is the uniform probability density on the circle. Formula~(\ref{76''}) is {\it {local}
\/} in the sense of Bell~\cite{Bell} and Clauser and Horne~\cite{CH}, because~of the {\it {product structure}\/} of the term
\be
\chi_{\alpha,00}(\lambda)\chi_{\beta+\pi,00}(\lambda)
=
\chi_{\alpha,00}(\lambda)\odot_0\chi_{\beta+\pi,00}(\lambda).
\ee
The case $k=l=0$ of Bayes law discussed in Section~\ref{Bayes} is (with $\theta=\alpha-\beta$)
\be
\frac{\alpha-\beta}{\pi}
&=&
\frac{\int\chi_{\beta+\pi,00}(\lambda)\chi_{\alpha,00}(\lambda)\rho_{00}(\lambda)\,{\rm d}\lambda}
{\int\chi_{\alpha,00}(\lambda')\rho_{00}(\lambda')\,{\rm d}\lambda' }=\frac{p(0_2, 0_1)}{p(0_1)}=\frac{p(1_2,1_1)}{p(1_1)}
\label{108''''}\\
&=&
\int \chi_{\beta+\pi,00}(\lambda)\frac{\chi_{\alpha,00}(\lambda)\rho_{00}(\lambda)}
{\int\chi_{\alpha,00}(\lambda')\rho_{00}(\lambda')\,{\rm d}\lambda' }{\rm d}\lambda,
\ee
which is equivalent to the assumption that the first measurement reduces the probability density according to
\be
\rho_{00}(\lambda)
\mapsto
\frac{\chi_{\alpha,00}(\lambda)\rho_{00}(\lambda)}
{\int\chi_{\alpha,00}(\lambda')\rho_{00}(\lambda')\,{\rm d}\lambda' }.\label{79}
\ee
Equation (\ref{79}) is an example of a classical projection postulate in theories based on $\mathbb{R}_0$ arithmetic.

Returning to the singlet case, corresponding to $k=1$, $l=0$, we can write it in analogy to (\ref{75''}) and (\ref{76''}),
{\small 
\be
g\big(p(a_2|a_1)\big)g\big(p(a_1)\big)
&=&
g\left(
\frac{\int\chi_{a_1,00}(\lambda)\chi_{a_2,00}(\lambda)\rho_{00}(\lambda)\,{\rm d}\lambda}
{\int\chi_{a_1,00}(\lambda)\rho_{00}(\lambda)\,{\rm d}\lambda }
\right)
g\left(
\int\chi_{a_1,00}(\lambda)\rho_{00}(\lambda)\,{\rm d}\lambda
\right)
\label{132}
\\
&=&
\frac{1}{2}
g\left(
2\int\chi_{a_1,00}(\lambda)\chi_{a_2,00}(\lambda)\rho_{00}(\lambda)\,{\rm d}\lambda
\right)\\
&=&
G\left(
\int\chi_{a_1,00}(\lambda)\chi_{a_2,00}(\lambda)\rho_{00}(\lambda)\,{\rm d}\lambda
\right)\label{92''_}\\
&=&
G\left(
\int\chi_{a_1\wedge a_2,00}(\lambda)\rho_{00}(\lambda)\,{\rm d}\lambda
\right)\label{92''},
\ee}
where $G(x)=\frac{1}{2}g(2x)$, and~
\be
\chi_{a_1\wedge a_2,00}(\lambda)
=
\chi_{a_1,00}(\lambda)\chi_{a_2,00}(\lambda)
\ee
is the characteristic function representing the conjunction ``$a_1$ and $a_2$''.  Notice that (\ref{92''_}) is a non-Newtonian integral 
\be
G\left(
\int\chi_{a_1,00}(\lambda)\chi_{a_2,00}(\lambda)\rho_{00}(\lambda)\,{\rm d}\lambda
\right)
=
\int\chi_{a_1,11}(\lambda)\odot_1\chi_{a_2,11}(\lambda)\odot_1\rho_{11}(\lambda)\,{\rm D}_1\lambda,
\label{94''}
\ee
of the function
\be
\chi_{a_1,11}\odot_1\chi_{a_2,11}\odot_1\rho_{11}:\mathbb{R}_1\to \mathbb{R}_1,
\ee
where
\be
\begin{array}{rcl}
\mathbb{R}_1                & \stackrel{\chi_{a_1,11}}{\longrightarrow}       & \mathbb{R}_1              \\
G^{-1}{\Big\downarrow}    &                                     & {\Big\uparrow}G  \\
\mathbb{R}_0                & \stackrel{\chi_{a_1,00}}{\longrightarrow}   & \mathbb{R}_0
\end{array},
\quad
\begin{array}{rcl}
\mathbb{R}_1                & \stackrel{\rho_{11}}{\longrightarrow}       & \mathbb{R}_1              \\
G^{-1}{\Big\downarrow}    &                                     & {\Big\uparrow}G  \\
\mathbb{R}_0                & \stackrel{\rho_{00}}{\longrightarrow}   & \mathbb{R}_0
\end{array},
\ee
and the multiplication is given by
\be
x\odot_1 y=G\big(G^{-1}(x)\odot_0 G^{-1}(y)\big)=G\big(G^{-1}(x) G^{-1}(y)\big).
\ee

The right-hand side of (\ref{94''}) has again the Bell--Clauser--Horne {{product form}\/}, the~only difference being that instead of $\odot_0$ one employs $\odot_1$. This is why  (\ref{94''}) can be regarded as a local hidden-variable representation of singlet-state probabilities, hence a counterexample to Bell's theorem. 
This is the main idea of the approach to singlet-state correlations introduced in~\cite{MCAPPA2021} and further discussed in~\cite{MCKN,MCAPPA2023,MCEntropy}.

A formal basis of the construction from~\cite{MCAPPA2021,MCKN,MCAPPA2023,MCEntropy} is given by the~following: 

\begin{lemma}
 {Consider four joint} 
 probabilities $p_{0_10_2}$, $p_{1_11_2}$, $p_{0_11_2}$, $p_{1_10_2}$, satisfying 
\be
\sum_{ab}p_{ab} &=& 1,\label{L2a}\\
\sum_{a}p_{aa_2} &=& \sum_{a}p_{a_1 a}=\frac{1}{2}.\label{L2b}
\ee
A sufficient condition for
\be
\sum_{ab}G(p_{ab}) &=& 1,\label{L2G}
\ee
is given by $G(p)=\frac{1}{2}g(2p)$, where $g$ satisfies Lemma~1. Any such $G$ has a fixed point at $p=1/4$.
\end{lemma}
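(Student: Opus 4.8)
The plan is to show first that the two normalization-type hypotheses collapse the four joint probabilities to a single free parameter ranging over $[0,1/2]$, and then to observe that $G(p)=\frac{1}{2}g(2p)$ is designed precisely so that the functional equation $g(p)+g(1-p)=1$ of Lemma~1 yields (\ref{L2G}) immediately on that one-parameter family.

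First I would extract the linear constraints hidden in (\ref{L2b}). Holding the second index fixed and summing over the first gives $p_{0_10_2}+p_{1_10_2}=\frac{1}{2}$ and $p_{0_11_2}+p_{1_11_2}=\frac{1}{2}$; holding the first index fixed and summing over the second gives $p_{0_10_2}+p_{0_11_2}=\frac{1}{2}$ and $p_{1_10_2}+p_{1_11_2}=\frac{1}{2}$. Comparing these four relations pairwise forces $p_{0_10_2}=p_{1_11_2}$ and $p_{0_11_2}=p_{1_10_2}$; writing $r=p_{0_10_2}=p_{1_11_2}$, one of the relations reads $p_{0_11_2}=p_{1_10_2}=\frac{1}{2}-r$, and (\ref{L2a}) is then automatically satisfied (indeed it is already a consequence of (\ref{L2b})). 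Hence the admissible data are exactly $p_{0_10_2}=p_{1_11_2}=r$ and $p_{0_11_2}=p_{1_10_2}=\frac{1}{2}-r$ with $r\in[0,1/2]$, so that $2r\in[0,1]$.

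Then I would substitute into the target sum. With $G(p)=\frac{1}{2}g(2p)$,
\be
\sum_{ab}G(p_{ab})
&=&
2G(r)+2G\left(\frac{1}{2}-r\right)
=
g(2r)+g(1-2r)=1,
\ee
the last equality holding by Lemma~1 because $2r\in[0,1]$; this is exactly (\ref{L2G}). For the fixed point: $G(p)=p$ is equivalent to $g(2p)=2p$, so fixed points of $G$ in $[0,1/2]$ correspond under $p\leftrightarrow 2p$ to fixed points of $g$ in $[0,1]$; since $g(1/2)=1/2$ by Lemma~1, one gets $G(1/4)=\frac{1}{2}g(1/2)=\frac{1}{2}\cdot\frac{1}{2}=1/4$.

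I do not expect a genuine obstacle here: all the content sits in Lemma~1 together with the elementary algebra of the marginal constraints. The one point that deserves care --- and that also explains why the constants in $G(p)=\frac{1}{2}g(2p)$ are what they are --- is the bookkeeping of ranges: the hypotheses leave $r$ free only in $[0,1/2]$, the inner factor $2$ rescales this onto $[0,1]$, the interval on which $g(p)+g(1-p)=1$ is available, and the outer factor $\frac{1}{2}$ restores the overall normalization. It is also worth emphasising, as the statement does, that the condition is only \emph{sufficient}: (\ref{L2G}) amounts to a single equation evaluated on a one-parameter family, so many other profiles $G$ would serve; $G(p)=\frac{1}{2}g(2p)$ is simply the canonical choice inherited from the binary bijection $g$.
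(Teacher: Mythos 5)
Your proof is correct and fills in a gap the paper leaves implicit: the paper states Lemma~4 without proof, deferring to external references. Your reduction of the marginal constraints (\ref{L2b}) to the single-parameter family $p_{0_10_2}=p_{1_11_2}=r$, $p_{0_11_2}=p_{1_10_2}=\frac{1}{2}-r$, followed by $2G(r)+2G(\tfrac{1}{2}-r)=g(2r)+g(1-2r)=1$ via Lemma~1, and the fixed-point check $G(1/4)=\tfrac{1}{2}g(1/2)=1/4$, is exactly the intended argument and there is no gap.
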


A disadvantage of the construction based on Lemma~4 is its restriction to ``rotationally symmetric'' probabilities, i.e.,~those fulfilling (\ref{L2b}). Moreover, being in itself sufficient as a counterexample to Bell's theorem, it lacks the generality typical of arbitrary $k,l\in\mathbb{Z}$. 

The fundamental structure of the quantum probability model seems to be best described by Formula (\ref{P(a,b)}).

So far, the~angles occurring in singlet-state probabilities were interpretable as experimental parameters (angles between polarizers or Stern--Gerlach devices). But~what about arbitrary quantum states, even those described by infinite-dimensional Hilbert spaces? It turns out that the parameter in question can be interpreted in geometric terms, independently of the physical nature of the~problem.

\section{Fubini--Study Geodesic Distance as a Hidden~Variable}
\label{Fubini}

The scalar product $\langle a|b\rangle$ of two vectors belonging to some Hilbert space defines their Fubini--Study geodesic distance $\theta(a,b)$ \cite{Fubini,Study,Lane,Brody,Karol,Darek},
\be
|\langle a|b\rangle|^2=\langle a|a\rangle\langle b|b\rangle \cos^2 \theta(a,b)
.
\ee
Let $P_b$ be a projector, $|b\rangle=P_b|a\rangle$, and~$\langle a|a\rangle=1$, so that $\langle b|b\rangle=\langle a|b\rangle=\langle a|P_b|a\rangle=P(b|a)$ is a conditional quantum probability. The~geodesic distance between $|a\rangle$ and $|b\rangle$ satisfies
\be
|\langle a|b\rangle|^2=\langle a|P_b|a\rangle^2=\langle a|P_b|a\rangle \cos^2 \theta(a,b),
\ee
and thus,
\be
P(b|a)=\cos^2 \theta(a,b).
\label{123}
\ee
The formal angle $\theta(a,b)$ between the two vectors in the Hilbert space acquires a direct physical interpretation if $a$ and $b$ represent linear polarizations of photons: $\theta(a,b)$ becomes the angle between two polarizers. In~the analogous case of the electrons, $\theta(a,b)$ would represent one half of the angle between two Stern--Gerlach~devices. 

Next, let us rewrite (\ref{123}) as
\be
P(b|a)
=
\cos^2 \theta(a,b)=\sin^2\frac{\pi}{2}p(b|a)
=
g\big(p(b|a)\big)
=
\cos^2 \frac{\pi}{2}\big(1-p(b|a)\big),
\ee
where $g:[0,1]\to[0,1]$ is the bijection we have introduced in the context of the singlet state. 
Probabilities $p(b|a)$ and $P(b|a)=g(p(b|a))$ represent, respectively, the~hidden and the quantum neighboring levels of the hierarchy of (conditional) probabilities. The~hidden probability is thus directly related to the Fubini--Study geodesic distance,
\be
\theta(a,b) &=&\frac{\pi}{2}\big(1-p(b|a)\big),\\
p(b|a) &=&1-\frac{\theta(a,b)}{\pi/2},\\
q(b|a) &=&1-p(b|a)=\frac{\theta(a,b)}{\pi/2},
\ee
where $q(b|a)$ is the probability that two randomly chosen and intersecting straight lines intersect at an angle not exceeding $\theta(a,b)\in[0,\pi/2]$. 

{\it The Fubini--Study geodesic distance has been turned into  a classical measure of a subset of a quarter-circle\/}. It defines the whole hierarchy of probabilities, $g^k\big(p(b|a)\big)$, where $k=1$ is the quantum one. Note that $g(p)=\sin^2\frac{\pi}{2}p$ has been elevated to the role of a universal bijection, defining an arithmetic applicable to all the possible (pure) quantum states. Explicitly, we~find
\be
&\vdots&\nonumber\\
g^{-1}\big(p(b|a)\big) &=&\frac{1}{\pi/2}\arcsin\sqrt{1-\frac{\arccos\sqrt{P(b|a)}}{\pi/2}},\label{130}\\
g^0\big(p(b|a)\big) &=&1-\frac{\arccos\sqrt{P(b|a)}}{\pi/2},\\
g^1\big(p(b|a)\big) &=&\sin^2\frac{\pi}{2}\left(1-\frac{\arccos\sqrt{P(b|a)}}{\pi/2}\right)=P(b|a),\\
g^2\big(p(b|a)\big) &=&\sin^2\left(\frac{\pi}{2}P(b|a)\right),\\
g^3\big(p(b|a)\big) &=&\sin^2\left[\frac{\pi}{2}\sin^2\left(\frac{\pi}{2}P(b|a)\right)\right],\label{134}\\
&\vdots&\nonumber
\ee
Since $\langle a|P_b|a\rangle=P(b|a)$ is real, it can be written as a real quadratic form,
 \be
\langle a|P_b|a\rangle
=
\sum_{rs}\Re(a_r)A_{rs}\Re(a_s)+\sum_{rs}\Im(a_r)B_{rs}\Im(a_s)+\sum_{rs}\Re(a_r)C_{rs}\Im(a_s).\label{135}
\ee
Hence,
{\small \be
g^2\big(p(b|a)\big)
&=&
g^1\big(P(b|a)\big)\\
&=&
g^1\left(\sum_{rs}\Re(a_r)A_{rs}\Re(a_s)+\sum_{rs}\Im(a_r)B_{rs}\Im(a_s)+\sum_{rs}\Re(a_r)C_{rs}\Im(a_s)\right)\\
&=&
\bigoplus_{rs}
g\big(\Re(a_r)\big)\odot g\big(A_{rs}\big)\odot g\big(\Re(a_s)\big)
\bigoplus_{rs}
 g\big(\Im(a_r)\big)\odot g\big(B_{rs}\big)\odot g\big(\Im(a_s)\big)
\nonumber\\
&\pp=&
\pp{\bigoplus_{rs}}
\bigoplus_{rs}
g\big(\Re(a_r)\big)\odot g\big(C_{rs}\big)\odot g\big(\Im(a_s)\big)
\label{138}
\\
&=&
\langle g(a)|\odot g(P_b)\odot|g(a)\rangle
=
\langle a_1|\odot_1 P_{b,1}\odot_1|a_1\rangle,
\label{139}
\ee}
where $\langle g(a)|\odot g(P_b)\odot|g(a)\rangle$ in (\ref{139}) is defined in a way that parallels the form of 
\be 
\langle a|P_b|a\rangle=\langle a_0|\odot_0P_{b,0}\odot_0|a_0\rangle
\ee
in (\ref{135}), but~with all the ``standard'' sums $+=\oplus_0$ and products $\cdot=\odot_0$ replaced by $\oplus_1$ and $\odot_1$, and~all the coefficients transformed by $g$.
In effect, the~difference between (\ref{135}) and (\ref{139}) is purely notational, as~one can write the whole hierarchy of probabilities in a ``quantum'' form as well,
\be
&\vdots&\nonumber\\
g^0\big(p(b|a)\big)
&=&
\langle a_{-1}|\odot_{-1} P_{b,-1}\odot_{-1}|a_{-1}\rangle,\label{134,,,}\\
g^1\big(p(b|a)\big)
&=&
\langle a_0|\odot_0 P_{b,0}\odot_0|a_0\rangle,\\
g^2\big(p(b|a)\big)
&=&
\langle a_1|\odot_1 P_{b,1}\odot_1|a_1\rangle,\\
g^3\big(p(b|a)\big)
&=&
\langle a_2|\odot_2 P_{b,2}\odot_2|a_2\rangle
\label{137,,,}
\\
&\vdots&\nonumber
\ee
This is the Copernican principle in action. The~choice of the ``quantum'' level of the hierarchy is just a matter of convention. In~fact, any formula from (\ref{130})--(\ref{134}) can represent quantum mechanics known from~textbooks.

It is perhaps more striking that any of these levels can be regarded as a hidden-variable level, where the hidden variable is given by an appropriate geodesic~distance. 

The concrete example of $g(p)=\sin^2\frac{\pi}{2}p$ can help us to understand the  structure of the whole hierarchy. We will see that, in~spite of the infinite dimension of the hierarchy, one effectively deals with a finite dimensional~structure.

\section{Effective Trunction of the Infinite Hierarchy of~Probabilities}
\label{SecFinal}

Figure~\ref{FigLast} explains why in spite of the infinite number of levels, those that statistically differ between one another may be limited to a finite ``band'' in the hierarchy. What it practically means is that if our level of the hierarchy is given by some $l$ (say, $l=0$) then, depending on the available precision of our experiments, we may restrict the analysis to a finite collection of probabilities. In~the example depicted in Figure~\ref{FigLast}, we can restrict the analysis to 31 levels,
\be
\{g^{-15}(p),\dots,g^{-1}(p),p,g(p),\dots,g^{15}(p)\},
\ee 
because the full infinite hierarchy is indistinguishable from
\be
\{\dots,g^{-15}(p),\dots,g^{-15}(p),\dots,g^{-1}(p),p,g(p),\dots,g^{15}(p),\dots,g^{15}(p),\dots\},
\ee 
When increasing $k$ in $g^k$, we effectively obtain a  theory that may look discrete, because~$g^{k}(p)$, $k>k_{\rm max}$, are indistinguishable from the red step function in Figure~\ref{FigLast}. For~$g^{k}(p)$, $k<k_{\rm min}$, we obtain an analogous behavior of the inverse~functions. 

Let us stress that the above argument for indistinguishability has been formulated only for probabilities,  $p\in[0,1]$, hence for $g(p)$, and~not for $g_\mathbb{R}(x)$, $x\not\in[0,1]$. In~principle, for~$x\not\in[0,1]$, all the levels of the hierarchy may be~distinguishable.

Notice that for this concrete $g(p)=\sin^2\frac{\pi}{2}p$, one finds $g^{15}(p)\approx 0$ if $p<1/2$, $g^{15}(1/2)=1/2$, and~$g^{15}(p)\approx 1$ if $p>1/2$. Thus, the~higher-level probabilities possess several obvious analogies to neural activation functions~\cite{activ}, making links between the hierarchical structure and the measurement problem even more intriguing. 
An observer who measures $g^{15}(p)$ probabilities ignores practically  all the events whose probability is smaller than 1/2, and~treats all $p>1/2$ as~certain. 

This type of behavior is the essence of learning algorithms. An~intriguing possibility occurs that $g(p)$ is a probability related to the act of learning that events with probability $p$ are true. Hence, the~natural question: Is the stabilization of large $k>0$ iterates on effectively the step function a formal counterpart of stabilization of self-observation, a~creation of self-awareness? 

For the negative iterates, instead of a threshold function we tend toward a ``white noise'': $g^{-15}(0)= 0$, $g^{-15}(1/2)= 1/2$, $g^{-15}(1)= 1$, and~$g^{-15}(p)\approx 1/2$, for~$0<p<1$.  The~lower levels of the hierarchy become less and less diverse from the point of view of a higher-level observer. Here, the~analogy is with observations of micro-scale events is quite evident. The~relativity of  probability becomes analogous to the ``relativity of smallness''---what is small to us, may be large for a bacteria or an~atom.

It is worth recalling that $g^{-15}(p)$ and $g^{15}(p)$ only {\it look\/} discrete due to our limited resolution---in reality, both maps are continuous bijections of $[0,1]$ into~itself. 

Now, what about experiment and laws of large numbers? Can they somehow discriminate between all these probabilities?

\begin{figure}

\includegraphics[width=10 cm]{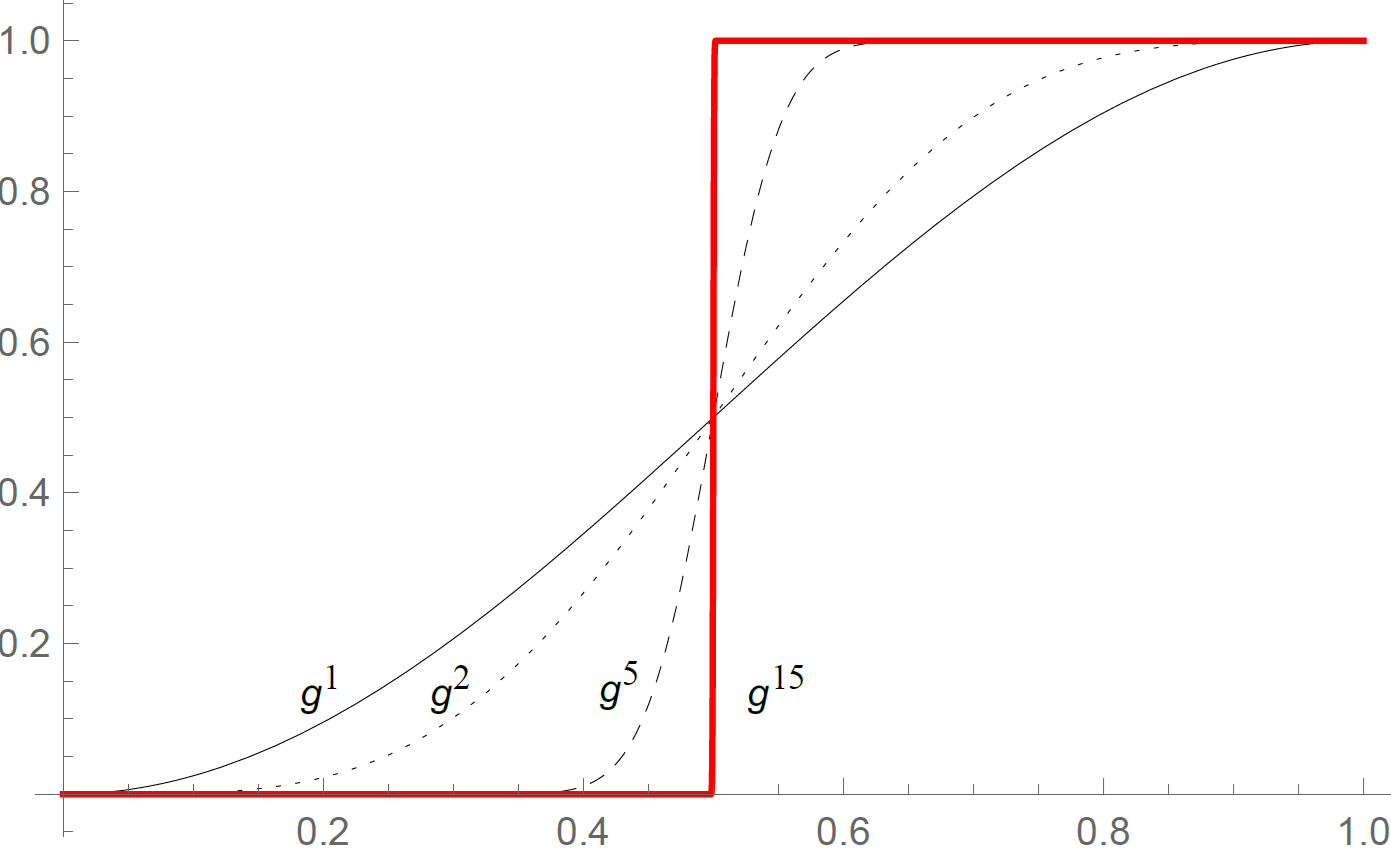}\\
\includegraphics[width=10 cm]{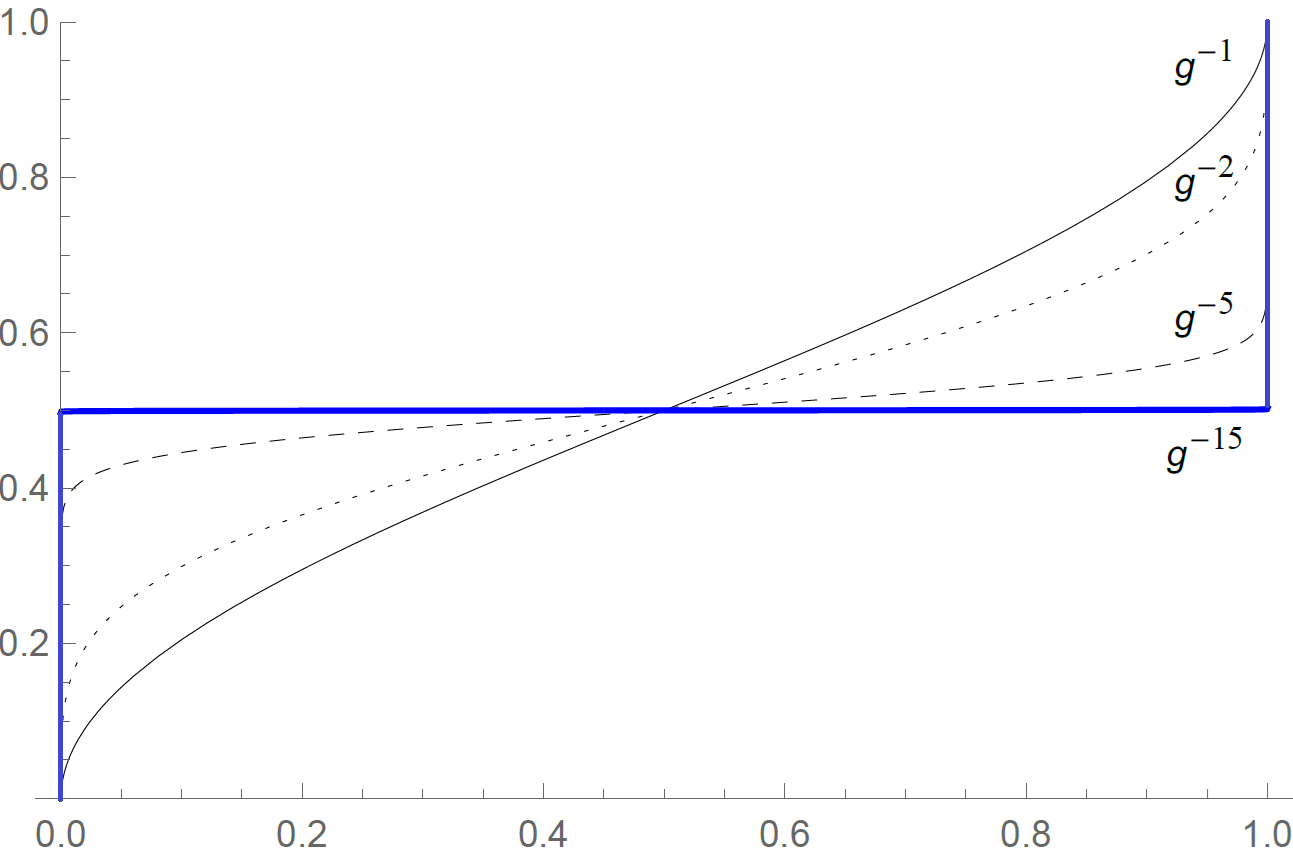}
\caption{A total of 1, 2, 5 and 15 iterations of $g(p)=\sin^2\frac{\pi}{2}p$ (upper plots). All the curves cross at $p=1/2$ and are of the sigmoidal form, analogously to activation functions occurring in learning algorithms. Is it just a coincidence, or~are there deeper connections to the problem of measurement, learning, or~consciousness? Iterates $g^k$ with $k>15$ are practically indistinguishable within the precision of the plot: They all look like the red step function. An~analogous phenomenon occurs for the negative iterates: $k=-1,-2,-5,-15$, but~here almost all events described by $g^{-15}(p)$ are equally probable, hence indistinguishable for level-0 observers (lower plots).  Effectively, even though the number of levels is infinite, the~distinguishable ones are restricted to a finite ``band'' $k_{\rm min}\le k\le k_{\rm max}$. Of~course, the~Copernican aspect of the hierarchy means that the same happens in a neighborhood of any $l$, and~not only $l=0$ depicted~here.}
\label{FigLast}
\end{figure}
\unskip   

\section{Hierarchical Laws of Large~Numbers}
\label{Section11}

Laws of large numbers formalize the relations between probabilities (real numbers), (natural) numbers of trials and successes, and~(rational) numbers of  their relative frequencies. However, as~we already know, all these notions are arithmetic dependent: a natural number $n_k=g^k(n)\in\mathbb{R}_k$ may not be a natural number from the point of view of some other $\mathbb{R}_l$, a~rational number $(n/m)_k=g^k(n/m)\in\mathbb{R}_k$ may not be a rational number from the point of view of $\mathbb{R}_l$, and~so on. The~most general law of large numbers should involve all the levels of the hierarchy simultaneously. Dealing with binary events, we need an appropriate generalization of  the Bernoulli law of large~numbers. 

To begin with, let us imagine we ``live'' in a world where all the possible computations are performed in terms of the arithmetic $\mathbb{R}_l$. If~we toss a coin, say, one hundred times, and~observe heads forty times, the~arithmetic formulation of the experiment involves $n_l=40_l$ heads in $N_l=100_l$ trials. The~experimental ratio is $n_l\oslash_l N_l=40_l\oslash_l100_l$. This is a rational number in $\mathbb{R}_l$.

{If the same experiment is described by an observer who employs arithmetic \mbox{$\mathbb{R}_j$, $j\neq l$}}, the~experimental ratio is given by $n_j\oslash_j N_j=40_j\oslash_j100_j$. In~terms of $g^l$ and $g^j$ we can write $40_l\oslash_l100_l=g^l(40/100)$ and $40_j\oslash_j100_j=g^j(40/100)$. 
Yet, if~we demanded \mbox{$g^l(40/100)=g^j(40/100)$}, it would imply that $g^{l-j}(40/100)=40/100$, i.e.,~$40/100$ is a fixed point of $g^{l-j}$. Since the same argument can be applied to any rational number, one arrives at the conclusion that the trivial case $g(x)=g^0(x)$ is the only~solution. 

One concludes that a nontrivial $g$  generically implies $n_j\oslash_j N_j\neq n_l\oslash_l N_l$ for $l\neq j$. In~other words, the~same experiment can be described by different probabilities, \mbox{$p_l=g^l(p)\neq p_j=g^j(p)$}, although~from the frequentist perspective both descriptions involve forty successes in one hundred trials. We inevitably arrive at  the whole~hierarchy. 

This is my tentative interpretation of the hierarchical structure.
However, the~links with neural activation functions deserve a separate~study.

In order to formulate a generalized  Bernoulli law of large numbers, we 
have to estimate the probability that 
\be
\left|g^k(p)\ominus_l n_l \oslash_l N_l\right|
=
\left|g^l\left(g^{k-l}(p)-n/N\right)\right|
=
g^l\left(\left|g^{k-l}(p)-n/N\right|\right)
\ge
\varepsilon_l=g^l(\varepsilon).
\label{116''''}
\ee
The modulus is defined in $\mathbb{R}_l$ in the standard way,
\be
|x|=\left\{
\begin{array}{cl}
x & \textrm{if $x\ge 0_l$}\\
\ominus_l x & \textrm{if $x< 0_l$}
\end{array}
\right.
,
\ee
where we keep in mind that, by~assumption, $0_l=g^l(0)=0$ and the ordering relation is unaffected by a strictly increasing $g$.
Inequality (\ref{116''''}) effectively boils down to 
\be
\left|g^{k-l}(p)-n/N\right|
\ge
\varepsilon.
\label{117''''}
\ee
Next, we 
note that probabilities depicted in the lower part of  Figure~\ref{Fig2} are normalized in consequence of the identity
\be
\left(
g^k(p)\oplus_l g^k(q)
\right)^{N_l}
&=&
\bigoplus_{n=0}^N{}\!_l
{{N_l}\choose{n_l}}
\odot_l 
g^k(q)^{(N-n)_l}\odot_l g^k(p)^{n_l}
=1_l=1,\\
{{N_l}\choose{n_l}}
&=&
g^l\left[{N}\choose{n}\right].
\ee
The probability 
\be
p(n_l,N_l)_k
&=&
{{N_l}\choose{n_l}}
\odot_l 
g^k(q)^{(N-n)_l}\odot_l g^k(p)^{n_l}
\\
&=&
g^l\left[
{{N}\choose{n}}
g^{k-l}(q)^{N-n}g^{k-l}(p)^{n}
\right]
\ee
corresponds to  $n_l$ sucessess in $N_l$ trials.
The expected number of successes and the corresponding variance read,
\be
\langle n_l\rangle_k
&=&
\bigoplus_{n=0}^N{}\!_l
n_l \odot_l {{N_l}\choose{n_l}}
\odot_l 
g^k(q)^{(N-n)_l}\odot_l g^k(p)^{n_l}
\\
&=&
g^l\left[Ng^{k-l}(p)\right]
=
N_l\odot_l g^{k}(p),\\
\langle n_l^{2_l}\rangle_k\ominus_l
\langle n_l\rangle_k^{2_l}
&=&
g^l\left(Ng^{k-l}(p)g^{k-l}(q)\right)
=
N_l\odot_l 
g^{k}(p)\odot_l g^{k}(q)
\label{86''_}\\
&=&
N_l^{2_l}
 \odot_l
\bigoplus_{n=0}^N{}\!_l
\left[
n_l\oslash_l N_l
\ominus_l
g^k(p)
\right]^{2_l}
\odot_l
p(n_l,N_l)_k
\label{87''_}.
\ee
Applying $g^{-l}$ to (\ref{86''_}) and (\ref{87''_}), we find
\be
g^{k-l}(p)g^{k-l}(q)/N
&=&
\sum_{n=0}^N
\left[
n/N
-
g^{k-l}(p)
\right]^{2}
{{N}\choose{n}}
g^{k-l}(q)^{N-n}g^{k-l}(p)^{n}
.
\ee
Now, let $n\in \mathbb{N}_{\varepsilon,g^{k-l}(p),N}$ if 
$
\left|n/N
-
g^{k-l}(p)
\right|
\ge
\varepsilon$. 
Then,
\be
g^{k-l}(p)g^{k-l}(q)/N
&\ge&
\sum_{n\in \mathbb{N}_{\varepsilon,g^{k-l}(p),N}}
\left[
n/N
-
g^{k-l}(p)
\right]^{2}
{{N}\choose{n}}
g^{k-l}(q)^{N-n}g^{k-l}(p)^{n}
\\
&\ge&
\varepsilon^2
\sum_{n\in \mathbb{N}_{\varepsilon,g^{k-l}(p),N}}
{{N}\choose{n}}
g^{k-l}(q)^{N-n}g^{k-l}(p)^{n}
\\
&=&
\varepsilon^2
p
\left(
n\in \mathbb{N}_{\varepsilon,g^{k-l}(p),N}
\right),
\ee
where $p
\left(
n\in \mathbb{N}_{\varepsilon,g^{k-l}(p),N}
\right)
\in\mathbb{R}_0$ is the 0th-level probability that 
$\big|n/N-g^{k-l}(p)\big|
\ge
\varepsilon$. In~this way we have arrived at the standard Bernoulli law of large numbers in $\mathbb{R}_0$,
\be
p
\left(
n\in \mathbb{N}_{\varepsilon,g^{k-l}(p),N}
\right)\le
\frac{g^{k-l}(p)g^{k-l}(q)}{N\varepsilon^2}.\label{137''''}
\ee
Of course, the~left-hand side of (\ref{137''''}) cannot be greater than 1, so the number of trials $N$ must be chosen so that 
\be
\frac{g^{k-l}(p)g^{k-l}(q)}{\varepsilon^2}\le N.
\ee
For $p_l=g^l(p)$ we find, denoting $\varepsilon_l=g^l(\varepsilon)$, $N_l=g^l(N)$,
\be
p_l
\left(
n\in \mathbb{N}_{\varepsilon,g^{k-l}(p),N}
\right)
&\le&
g^l\left(\frac{g^{k-l}(p)g^{k-l}(q)}{N\varepsilon^2}\right)
=
g^l\left(\frac{g^{k-l}(p)g^{k-l}(q)}{g^{-l}(N_l)g^{-l}(\varepsilon_l)^2}\right)
\nonumber
\\
&=&
g^k(p)\odot_l g^k(q)\oslash_l\left(N_l\odot_l\varepsilon_l^{2_l}\right),
\ee
for any $k\in\mathbb{Z}$.

\begin{figure}

\includegraphics[width=11 cm]{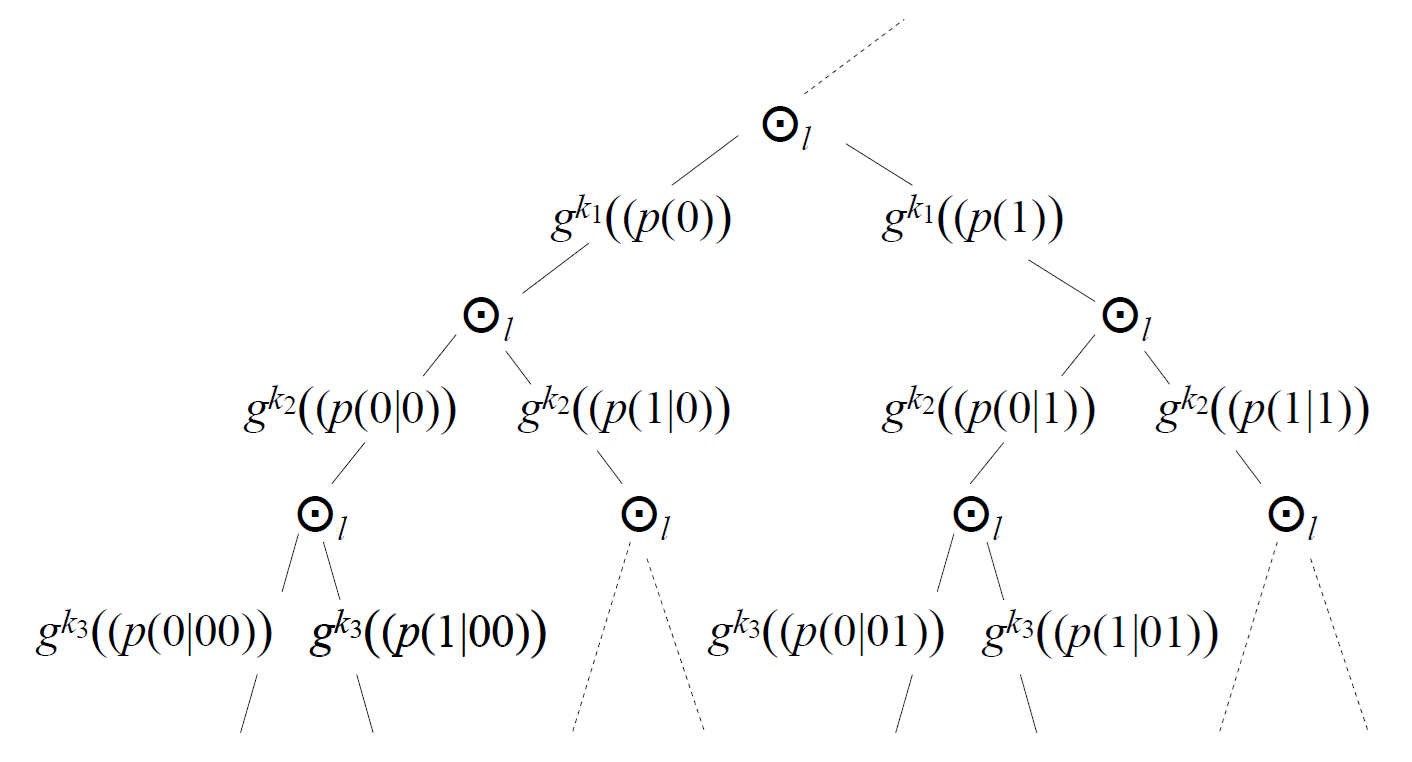}\\
\includegraphics[width=10 cm]{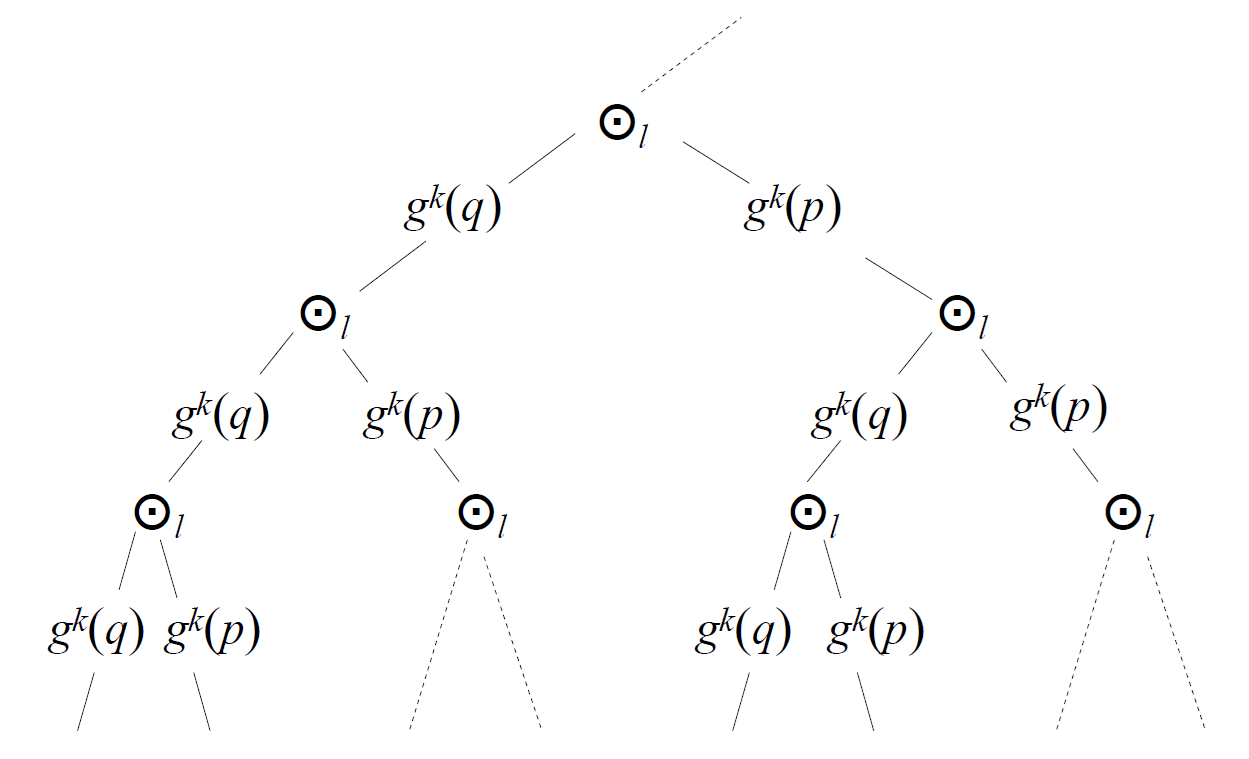}
\caption{The upper diagram: An $\mathbb{R}_l$-valued branch of a binary tree of conditional probabilities. This is how one can include events with more results than just two.  Assuming independent events and the same value of all $k_j$ (the lower diagram), we can derive a hierarchical analog  of  the Bernoulli law of large numbers. Laws of large numbers are the places where theory and experiment~meet.}
\label{Fig2}
\end{figure}

In order to have a feel of the influence of $l\in\mathbb{Z}$ on the rate of convergence of experimental ratios to probabilities, consider the simple case of a symmetric coin,  $p=q=1/2$, and~the universal quantum bijection $g(x)=\sin^2\frac{\pi}{2}x$. Since $g^{k-l}(1/2)=1/2$ for any $k,l$, we have to estimate 

\be
p_l
\left(
n\in \mathbb{N}_{\varepsilon,g^{k-l}(p),N}
\right)
&\le&
g^l\left(\frac{1}{4N\varepsilon^2}\right), 
\label{139''''}\\
\frac{1}{4\varepsilon^2}&\le& N.
\ee
Figure~\ref{Fig3} illustrates the right-hand side of (\ref{139''''}) for $\varepsilon=0.1$ and $25\le N\le 75$, for~the first four iterates of $g$, from~$g^1(x)=\sin^2\frac{\pi}{2}x$ to 
\be
g^4(x)=\sin^2\frac{\pi}{2}\left(\sin^2\frac{\pi}{2}\left(\sin^2\frac{\pi}{2}\left(\sin^2\frac{\pi}{2}x\right)\right)\right).
\ee
The graphs are intriguing. Their interpretation is additionally obscured by the fact that Wolfram Mathematica operates in the arithmetic $\mathbb{R}_0$, which is not used by any of the four observers.  The~problem requires further~studies.

\begin{figure}

\includegraphics[width=12 cm]{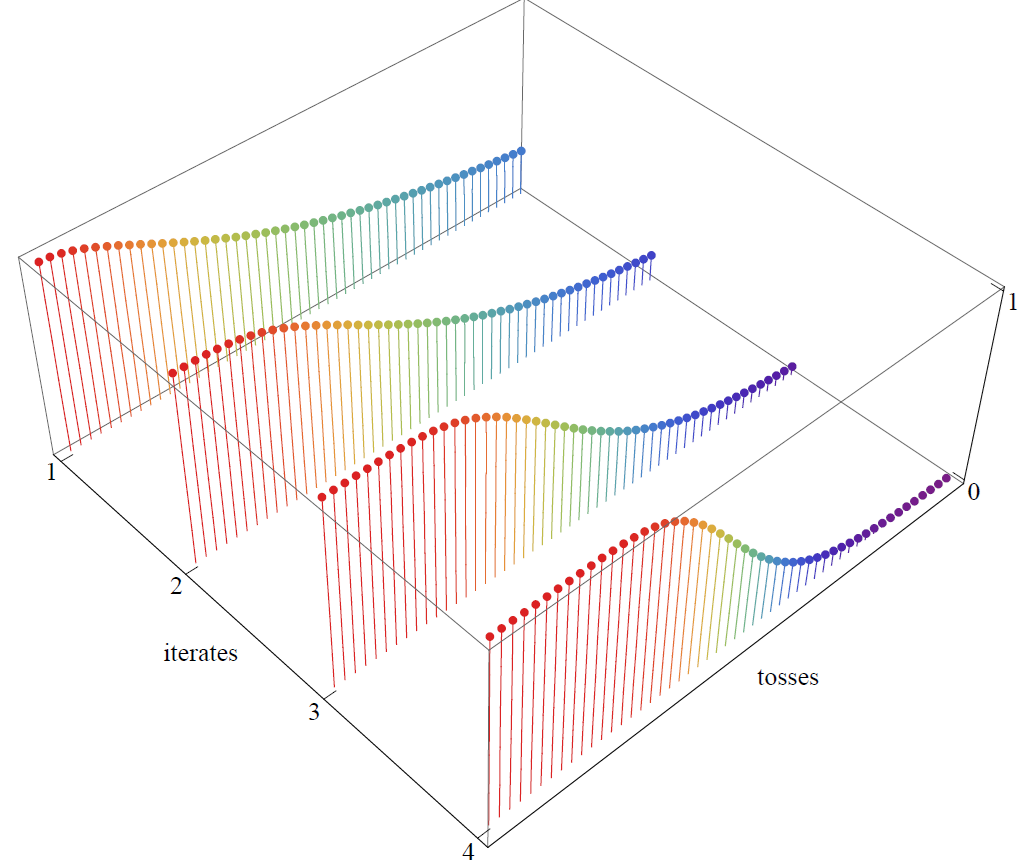}
\caption{{Hierarchical} 
 law of large numbers in action. Upper bound on probability of disagreement between theory and experiment in $N$ tosses of a symmetric coin for four different arithmetics $\mathbb{R}_l$ of the observer. Plot of the  right-hand side of (\ref{139''''}) with $\varepsilon=0.1$, for~the four iterates $g^l$, $l=1,2,3,4$, of~$g(x)=\sin^2\frac{\pi}{2}x$. The~number of coin tosses $25\le N\le 75$. Plots are made in the arithmetic $\mathbb{R}_0$, implicitly assumed in {Wolfram Mathematica 14.} 
}
\label{Fig3}
\end{figure}

\section{Hierarchical Approach to Bell's Theorem---Revisited}
\label{Section12}

If we are able to reconstruct singlet-state probabilities in a hidden-variable way, it means that Bell's inequality (in any form) cannot be proved for the model. In~the hierarchical context the obstacle for proving the inequality lies in the lack of the $k$-level additivity of the $l$-level integrals, if~$l\neq k$. The~usual derivation, when seen from the hierarchical perspective, assumes $\oplus_0$-additivity of ${\rm D}_1\lambda$ integrals, which is untrue for a nontrivial $g$, and~$g(p)=\sin^2\frac{\pi}{2}$ in particular, hence the inequality derived at level zero does not apply to level 1: Level-0 formulas are ``violated'' by level-1 probabilities (and the other way around). 

Let us see how it works. Consider the joint probabilities
\be
P(a_1,a_2)
&=&
P(\textrm{first $a_1$ then $a_2$})
=
g\big(p(a_2|a_1)\big)g\big(p(a_1)\big)
=
g\big(p(a_1|a_2)\big)g\big(p(a_2)\big)
\nonumber\\
&=&
P(\textrm{first $a_2$ then $a_1$})=P(a_2,a_1),\label{152}
\ee
where we assume the independence of the order in which the measurements are performed. This is typical of the scenarios involving ``observer 1 measuring $a_1$'' (``Alice'') and  
 ``observer 2 measuring $a_2$'' (``Bob'') who are space-like separated and thus the order is~undefined. 

Now, we will  derive an analog of the Clauser--Horne inequality~\cite{CH}. We will work with probabilities (\ref{152}). Let us stress that an analogous derivation was presented in~\cite{MCKN}, but~was based on the form occurring in (\ref{94''}), that is by means of the bijection $G$. The~derivation we will discuss now is based on $g(x)$, and~not on  $G(x)=\frac{1}{2}g(2x)$. 
Why? Because we want a proof that is easy to generalize to any $k,l\in\mathbb{Z}$.

We assume a local-hidden variable form of the probabilities that occur at the hidden level (level zero), hence
\be
p(a_2|a_1)
&=&
\frac{\int\chi_{a_1,00}(\lambda)\chi_{a_2,00}(\lambda)\rho_{00}(\lambda)\,{\rm d}\lambda}
{\int\chi_{a_1,00}(\lambda)\rho_{00}(\lambda)\,{\rm d}\lambda }
,\\
p(a_1)
&=&
\int\chi_{a_1,00}(\lambda)\rho_{00}(\lambda)\,{\rm d}\lambda.
\ee
Level-one conditional probabilities 
\be
g\big(p(a_2|a_1)\big)
&=&
g\left(
\frac{\int\chi_{a_1,00}(\lambda)\chi_{a_2,00}(\lambda)\rho_{00}(\lambda)\,{\rm d}\lambda}
{\int\chi_{a_1,00}(\lambda)\rho_{00}(\lambda)\,{\rm d}\lambda}
\right)
\\
&=&
\int\chi_{a_1,11}\odot_1\chi_{a_2,11}\odot_1\rho_{11}(\lambda)\,{\rm D}_1\lambda
\oslash_1
\int\chi_{a_1,11}\odot_1\rho_{11}(\lambda)\,{\rm D}_1\lambda,
\ee
can be rewritten in several useful forms. First of all, introducing the reduced (conditional) probability density we obtain the ``projection postulate'',
\be
\rho_{11}(\lambda)
&\mapsto&
\rho_{a_1,11}(\lambda)
=
\chi_{a_1,11}\odot_1\rho_{11}(\lambda)
\oslash_1
\int\chi_{a_1,11}\odot_1\rho_{11}(\lambda'){\rm D}_1\lambda',
\label{159}\\
g\big(p(a_2|a_1)\big)
&=&
\int\chi_{a_2,11}\odot_1\rho_{a_1,11}(\lambda)\,{\rm D}_1\lambda.
\ee
Secondly, we can explicitly express the conditional probability in a local Clauser--Horne form (in the arithmetic $\mathbb{R}_1$),
\be
g\big(p(a_2|a_1)\big)
&=&
\int x_{a_1,11}(\lambda)\odot_1 y_{a_2,11}(\lambda)\odot_1\rho_{11}(\lambda)\,{\rm D}_1\lambda,
\ee
where
\be
x_{a_1,11}(\lambda)
&=&
\chi_{a_1,11}(\lambda)\oslash_1 \int\chi_{a_1,11}(\lambda')\odot_1\rho_{11}(\lambda'){\rm D}_1\lambda'\\
&=&
g_\mathbb{R}\left(
\frac{\chi_{a_1,00}(\lambda)}{\int\chi_{a_1,00}(\lambda')\rho_{00}(\lambda'){\rm d}\lambda'}
\right),
\\
y_{a_2,11}(\lambda)
&=&
\chi_{a_2,11}(\lambda)=g\big(\chi_{a_2,00}(\lambda)\big)=\chi_{a_2,00}(\lambda),
\ee
(because $g(0)=0$, $g(1)=1$). 

Repeating step by step the derivation of the Clauser--Horne inequality~\cite{CH}, but~here in the arithmetic $\mathbb{R}_1$, we can derive an analogous inequality which {\it must\/} be satisfied at the quantum level of the hierarchy. Such an inequality {\it cannot be violated\/} by quantum probabilities.
For simplicity let us reduce the analysis to singlet-state probabilities and $g_\mathbb{R}(n)=n$ for any $n\in\mathbb{Z}$. Then,
\be
\int x_{a_1,11}(\lambda)\odot_1\rho_{11}(\lambda){\rm D}_1\lambda
&=&
1,
\\
\int y_{a_2,11}(\lambda)\odot_1\rho_{11}(\lambda){\rm D}_1\lambda
&=&
1/2,
\ee
\be
0\le x_{a_1,11}(\lambda)
\le
X=g_\mathbb{R}\left(
\frac{1}{\int\chi_{a_1,00}(\lambda')\rho_{00}(\lambda'){\rm d}\lambda'}
\right)=2,
\ee
\be
0\le y_{a_2,11}(\lambda)\le Y=1,
\ee
and
\be
g\big(p(a_2|a_1)\big)
=g\big(p(a_1|a_2)\big)
.
\ee
Next, we consider the Clauser--Horne linear combination 
{\footnotesize \be
CH(\lambda)
&=&
x_{a_1,11}\odot_1 y_{b_2,11}(\lambda)
\ominus_1
x_{a_1,11}\odot_1 y_{b'_2,11}(\lambda)
\oplus_1
x_{a'_1,11}\odot_1 y_{b_2,11}(\lambda)
\oplus_1
x_{a'_1,11}\odot_1 y_{b'_2,11}(\lambda)
\nonumber\\
&\pp=&
\ominus_1
x_{a'_1,11}\odot_1 Y
\ominus_1
X\odot_1 y_{b_2,11}(\lambda).
\ee}Repeating in $\mathbb{R}_1$ the reasoning from~\cite{CH}, we obtain
\be
-2\le CH(\lambda)\le 0.
\ee
$\mathbb{R}_1$-multiplying the latter by $\rho_{11}(\lambda)$, integrating with ${\rm D}_1\lambda$, and~taking into account the $\mathbb{R}_1$-linearity of the ${\rm D}_1\lambda$ integral, we find
\be
0\le 
g\big(p(a_1|b_2)\big)
\ominus_1
g\big(p(a_1|b'_2)\big)
\oplus_1
g\big(p(a'_1|b_2)\big)
\oplus_1
g\big(p(a'_1|b'_2)\big)
\le 2.\label{170}
\ee
Notice that inequality (\ref{170}) involves conditional probabilities, as~opposed to the original Clauser--Horne one which was based on joint probabilities. The~inequalities derived in the arithmetic induced by $G(x)$ and discussed in~\cite{MCAPPA2021,MCKN} were also based on joint probabilities. However, joint probabilities involve the ``macroscopic'' level-0 multiplication of $1/2$ by $\cos^2(\alpha/2)$, whereas the conditional probabilities involve only the arithmetic of the ``microscopic'' level-1 probability $\cos^2(\alpha/2)$.

When investigating the violation of inequalities such as (\ref{170}) one should keep in mind the difference between $g(x)=\sin^2\frac{\pi}{2}x$, for~$x\in[0,1]$, and~its extension $g_\mathbb{R}$(x) beyond the interval $[0,1]$. Here, (\ref{170}) is derived under the assumption that $g_\mathbb{R}(n)=n$, for~any integer $n$. Readers interested in explicit examples of $g_\mathbb{R}$ may consult~\cite{MCAPPA2021,MCAPPA2023,MCKN}.

The inequality that can indeed be violated is 
\be
0\le 
g\big(p(a_1|b_2)\big)
-
g\big(p(a_1|b'_2)\big)
+
g\big(p(a'_1|b_2)\big)
+
g\big(p(a'_1|b'_2)\big)
\le 2,
\label{172}
\ee
but it {\it {cannot be proved}
\/} for the model, so is simply untrue. The technical difficulty in proving (\ref{172}) is the lack of $\mathbb{R}_0$-linearity of 
the ${\rm D}_1\lambda$ integral. 

The notion of ``violation'' of a formula is, in~my opinion, very confusing. In~the same sense one could say that the real-number inequality $x^2\ge 0$ is violated by complex numbers. Instead of saying that $i^2=-1$ violates $x^2\ge 0$ one rather says that $x^2\ge 0$ cannot be proved for all $x\in\mathbb{C}$. The~same happens with the Bell inequality, derived in $\mathbb{R}_0$ but not valid in $\mathbb{R}_1$.
On the other hand, the~inequalities that {\it can\/} be derived in $\mathbb{R}_1$ are never ``violated'' in $\mathbb{R}_1$, but~certainly will be untrue in some other $\mathbb{R}_k$.

\section{Interference, Propagators, Dynamics$\dots$}
\label{Section13}

Formulas (\ref{134,,,})--(\ref{137,,,}) show that the conditional probabilities can be written in scalar-product forms,
 \be
&\vdots&\nonumber\\
g^0\big(p(b|a)\big)
&=&
\langle b_{-1}|\odot_{-1}|a_{-1}\rangle=\langle b|a\rangle_{-1},\label{134,,,,}\\
g^1\big(p(b|a)\big)
&=&
\langle b_0|\odot_0|a_0\rangle=\langle b|a\rangle_{0},\\
g^2\big(p(b|a)\big)
&=&
\langle b_1|\odot_1|a_1\rangle=\langle b|a\rangle_{1},\\
g^3\big(p(b|a)\big)
&=&
\langle b_2|\odot_2|a_2\rangle=\langle b|a\rangle_{2},
\label{137,,,,}
\\
&\vdots&\nonumber
\ee
where we have introduced the compact notation,
\be
\langle b_k|\odot_k|a_k\rangle=\langle b|a\rangle_{k}=g\left(\langle b|a\rangle_{k-1}\right).
\ee
These concrete scalar products are real. However, a~complex scalar product can be always treated as a pair of reals with, in~principle, different arithmetics for real and imaginary parts (see Appendix \ref{appa}). This type of generalized complex numbers was applied to non-Newtonian Fourier analysis on fractals~\cite{ACK2016}, and~proved very useful in circumventing certain impossibility theorems about Fourier transforms on the triadic Cantor set. 
Scalar products (\ref{134,,,,})--(\ref{137,,,,}) when generalized to complex numbers (see Appendix \ref{appa}) can be used to generalize Feynman's path integral formalism to its hierarchical form, ultimately leading to propagators and time~evolution. 

We leave it for a future~paper.

\section{An Open~Ending}
\label{Section14}

Standard modern physics involves a three-level hierarchy: quantum, classical and cosmological.
As human observers, we are positioned at the center of this hierarchy, but~the connections with the remaining two levels remain unclear. We do not understand how is it that we observe quantum properties (the measurement problem). Similarly, we do not understand our relation with the large-scale universe (the dark energy problem). In~both cases the arithmetic freedom is probably essential~\cite{MCAPPA2021,MCFS2021} but generally overlooked by our scientific~community.

Bell's theorem is generally believed  to eliminate levels lower than the quantum one, but~the hierarchical picture questions this viewpoint: Quantum and classical probabilities typical of the singlet state belong to neighboring levels in the hierarchy---{\it {any}
\/} two neighboring levels. Elimination of any of the levels, thus, would destroy the whole hierarchical structure, all quantum levels included~\cite{MCAPPA2023}.

To the best of my knowledge, the~first systematic study of generalized arithmetics in physics was initiated by my paper~\cite{MC}, in~which the relativity of arithmetic was interpreted in terms of a fundamental symmetry. 
However, I merely rediscovered a structure that had previously been introduced to calculus by Grossman and Katz 
(non-Newtonian calculus)~\cite{GK,G79,G79'}, Maslov (idempotent analysis) \cite{Maslov} and Pap (g-calculus) \cite{P}.  The~origins of the idea of generalized arithmetic and calculus can be traced back to the works of Volterra on the product integral~\cite{Volterra},   Kolmogorov~\cite{K1930}, and Nagumo~\cite{N1930} on generalized means, and~R\'enyi on generalized entropies~\cite{R}.  Studies of a nonstandard number theory were initiated by Rashevsky~\cite{Rashevsky} and, in~a concrete form of non-Diophantine arithmetic, developed by  Burgin~\cite{Burgin,Burgin',Burgin'',Burgin3}. Generalized forms of arithmetic can be found in Bennioff's attempts to formulate a coherent theory of physics and mathematics~\cite{B2002,B2002',B2002'',B2002''',B2002''''}.  Mathematical constructions such as Lad's impediment functions~\cite{Lad}, cepstral signal analysis~\cite{cep,cep1}, fractal $F^\alpha$-calculus~\cite{Gangal,Gangal',A4,A4',A6,Alireza}, or~nonextensive statistics~\cite{TMP,Kan,Nivanen,K1,K1'}, involve certain formal elements analogous to non-Newtonian integration or differentiation. 
The first application of non-Newtonian calculus to probability of which I am aware was provided by Meginniss in his analysis of the objectivity of $p$ versus the subjectivity of $g(p)$, with~applications to gambling theory~\cite{Meginniss}.  Another field in which generalized arithmetic and non-Newtonian calculus are starting to attract attention is mathematical finance~\cite{Carr1,Carr2}. From~my personal perspective, the~most important achievements of the new formalism include circumventing the limitations of Bell's theorem  and Tsirelson bounds  in quantum mechanics~\cite{MCAPPA2021,MCKN}; the arithmetic of time, which appears to eliminate dark energy from cosmology in the same way that the arithmetic of velocities eliminated the luminiferous aether from special relativity~\cite{MCFS2021}; formulating wave propagation along fractal coastlines~\cite{Czachor2019}; and overcoming the limitations of Fourier analysis on Cantor sets~\cite{ACK2016}.

The two most important observations of the present study seem to be the interpretation of the singlet-state probabilities in terms of several different arithmetic levels occurring in a single Formula (\ref{P(a,b)}),
\be
P(a,b)=\underbrace{g\big(\overbrace{p(a|b)}^{\textrm{hidden}}\big)}_{\textrm{quantum}}\underbrace{\odot_0}_{\textrm{macroscopic}} \underbrace{g\big(\overbrace{p(b)}^{\textrm{hidden}}\big)}_{\textrm{quantum}}.
\ee
and the possible links with neural network learning~algorithms. 

The hierarchical structure is clearly ``there''. What we have understood so far is just the tip of the iceberg.

\vspace{6pt}

\acknowledgments{Calculations in Mathematica were carried out at the Academic Computer Center in Gda{\'n}sk (CI TASK project~pt01234).}

\appendix
\section[\appendixname~\thesection]{}\label{appa}

\subsection[\appendixname~\thesubsection]{{Proof} 
 of (\ref{86''}):}

\vspace{-14pt}
\centering 
\be
\frac{{\rm D}_l A_{lk}(x)}{{\rm D}_k x}
&=&
\lim_{\delta\to 0}
\Big(
A_{lk}(x\oplus_k\delta_k)\ominus_l A_{lk}(x)
\Big)
\oslash_l \delta_l
\nonumber\\
&=&
\lim_{\delta\to 0}
\Big(
g^{l-m}\circ  A_{mn}\circ f^{k-n}(x\oplus_k\delta_k)\ominus_l g^{l-m}\circ  A_{mn}\circ f^{k-n}(x)
\Big)
\oslash_l \delta_l
\nonumber\\
&=&
\lim_{\delta\to 0}
g^l\Big(
g^{-l+l-m}\circ  A_{mn}\circ f^{k-n}\circ g^k\big(f^k(x)+f^k(\delta_k)\big)- g^{-l+l-m}\circ  A_{mn}\circ f^{k-n}(x)
\Big)
\oslash_l \delta_l
\nonumber\\
&=&
\lim_{\delta\to 0}
g^l\Big(
g^{-m}\circ  A_{mn}\circ f^{-n}\big(f^k(x)+f^k(\delta_k)\big)- g^{-m}\circ  A_{mn}\circ f^{k-n}(x)
\Big)
\oslash_l \delta_l
\nonumber\\
&=&
\lim_{\delta\to 0}
g^l\Big[\Big(
g^{-m}\circ  A_{mn}\circ f^{-n}\big(f^k(x)+\delta\big)- g^{-m}\circ  A_{mn}\circ f^{k-n}(x)
\Big)
/\delta\Big]
\nonumber\\
&=&
\lim_{\delta\to 0}
g^{l-m}\circ g^m\Big[f^{m}\circ g^m\Big(
g^{-m}\circ  A_{mn}\circ f^{-n}\big(f^k(x)+\delta\big)- g^{-m}\circ  A_{mn}\circ f^{k-n}(x)
\Big)
/f^{m}(\delta_m)\Big]
\nonumber\\
&=&
\lim_{\delta\to 0}
g^{l-m}\Big[g^m\Big(
g^{-m}\circ  A_{mn}\circ f^{-n}\big(f^k(x)+\delta\big)- g^{-m}\circ  A_{mn}\circ f^{k-n}(x)
\Big)
\oslash_m\delta_m\Big]
\nonumber\\
&=&
g^{l-m}\lim_{\delta\to 0}
\Big[\Big(
A_{mn}\circ g^{n}\big(f^k(x)+\delta\big)\ominus_m A_{mn}\circ f^{k-n}(x)
\Big)
\oslash_m\delta_m\Big]
\nonumber\\
&=&
g^{l-m}\lim_{\delta\to 0}
\Big[\Big(
A_{mn}\circ g^{n}\big(f^n\circ f^{k-n}(x)+f^n(\delta_n)\big)\ominus_m A_{mn}\circ f^{k-n}(x)
\Big)
\oslash_m\delta_m\Big]
\nonumber\\
&=&
g^{l-m}\lim_{\delta\to 0}
\Big[\Big(
A_{mn}\big(f^{k-n}(x)\oplus_n \delta_n\big)\ominus_m A_{mn}\big(f^{k-n}(x)\big)
\Big)
\oslash_m\delta_m\Big]
\nonumber\\
&=&
g^{l-m}
\left(
\frac{{\rm D}_m A_{mn}\left(f^{k-n}(x)\right)}{{\rm D}_n f^{k-n}(x)}.
\nonumber
\right)
\ee

\subsection[\appendixname~\thesubsection]{Proof of (\ref{87''}):}

Define
\be
B_{lk}(x)
&=&
\int_a^x
A_{lk}(y) {\rm D}_k y
=
g^l
\left(
\int_{f^k(a)}^{f^k(x)}
A_{00}(r) {\rm d}r
\right)
=
g^l
\left(
B_{00}\big(f^k(x)\big)
\right)
,\\
C_{lk}(x)
&=&
g^{l-m}
\left(
\int_{f^{k-n}(a)}^{f^{k-n}(x)}
A_{mn}(y) {\rm D}_ny
\right)
=
g^{l-m}
\left(
C_{mn}\big(f^{k-n}(x)\big)
\right)
\ee
Now compute the derivatives:
\be
\frac{{\rm D}_l }{{\rm D}_k x}
B_{lk}(x)
&=&
\frac{{\rm D}_l }{{\rm D}_k x}
\int_a^x
A_{lk}(y) {\rm D}_k y
\\
&=&
\frac{{\rm D}_l }{{\rm D}_k x}
g^l
\left(
B_{00}\big(f^k(x)\big)
\right)
\\
&=&
g^l
\left(
\frac{{\rm d} }{{\rm d} f^k(x)}
B_{00}\big(f^k(x)\big)
\right)
\\
&=&
g^l
\left(
\frac{{\rm d} }{{\rm d} f^k(x)}
\int_{f^k(a)}^{f^k(x)}
A_{00}(r) {\rm d}r
\right)
\\
&=&
g^l
\left(
A_{00}\big(f^k(x)\big)
\right)=A_{lk}(x)
\ee

\vspace{-15pt}
\be
\frac{{\rm D}_l C_{lk}(x)}{{\rm D}_k x}
&=&
g^{l-m}
\left(
\frac{{\rm D}_m C_{mn}\left(f^{k-n}(x)\right)}{{\rm D}_n f^{k-n}(x)}
\right)
\\
&=&
g^{l-m}
\left(
\frac{{\rm D}_m }{{\rm D}_n f^{k-n}(x)}
\int_{f^{k-n}(a)}^{f^{k-n}(x)}
A_{mn}(y) {\rm D}_ny
\right)
\\
&=&
g^{l-m}
\left(
A_{mn}\big(f^{k-n}(x)\big)
\right)
=A_{lk}(x)
\ee
The derivatives are identical,
\be
\frac{{\rm D}_l }{{\rm D}_k x}
g^{l-m}
\left(
\int_{f^{k-n}(a)}^{f^{k-n}(x)}
A_{mn}(y) {\rm D}_ny
\right)
&=&
\frac{{\rm D}_l }{{\rm D}_k x}
\int_a^x
A_{lk}(y) {\rm D}_k y,
\ee
which implies
\be
g^{l-m}
\left(
\int_{f^{k-n}(a)}^{f^{k-n}(x)}
A_{mn}(y) {\rm D}_ny
\right)
=
\int_a^x
A_{lk}(y) {\rm D}_k y
\oplus_l \textrm{constant}
\ee
Setting $x=a$ we find
\be
0_l=0_l\oplus_l \textrm{constant}=\textrm{constant}
\ee
hence
\be
g^{l-m}
\left(
\int_{f^{k-n}(a)}^{f^{k-n}(x)}
A_{mn}(y) {\rm D}_ny
\right)
=
\int_a^x
A_{lk}(y) {\rm D}_k y
\ee
\subsection[\appendixname~\thesubsection]{Powers (Repeated Multiplications)}

\label{Monomials}

In order to introduce generalized arithmetics of complex numbers we need a useful concept of a ``first power'' \cite{BC}. To~this end, 
consider two sets $X$ and $Y$ and a map $A:X\to Y$ which can be described by a convergent power series. If~$X$ and $Y$ are equipped with different arithmetics we first have to clarify the meaning of ``power''. This will be performed as follows.
Consider two bijections $f_X:X\to {\mathbb R}$, $f_Y:Y\to {\mathbb R}$, and~their composition $f=f_Y^{-1}\circ f_X$. The~map 
\be
X\ni x\mapsto x^{1_{XY}} =f(x)\in Y
\ee
defines a first $Y$-valued power of $x\in X$. 

\begin{lemma}
\be
(x^{1_{XY}})^{1_{YZ}}
&=&
x^{1_{XZ}},\\
x^{1_{XY}1_{YX}}
&=&
x=x^{1_{XX}},\\
(x\odot_{X}y)^{1_{XY}}
&=&
x^{1_{XY}}\odot_{Y}y^{1_{XY}},\\
(x\oplus_{X}y)^{1_{XY}}
&=&
x^{1_{XY}}\oplus_{Y}y^{1_{XY}}.
\ee
\end{lemma}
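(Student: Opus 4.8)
The plan is to reduce every one of the four identities to elementary cancellations of the defining bijections. Recall that the arithmetics on $X$ and $Y$ are
\be
x\odot_X y&=&f_X^{-1}\big(f_X(x)f_X(y)\big),\quad
x\oplus_X y=f_X^{-1}\big(f_X(x)+f_X(y)\big),
\ee
and analogously for $Y$ with $f_Y$, while the first power is $x^{1_{XY}}=f(x)=f_Y^{-1}\big(f_X(x)\big)$ with $f=f_Y^{-1}\circ f_X$.

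First I would establish the composition law, introducing a third bijection $f_Z:Z\to\mathbb{R}$ and its induced arithmetic. Substituting the definition twice,
\be
\big(x^{1_{XY}}\big)^{1_{YZ}}
&=&
f_Z^{-1}\Big(f_Y\big(f_Y^{-1}(f_X(x))\big)\Big)
=
f_Z^{-1}\big(f_X(x)\big)
=
x^{1_{XZ}},
\ee
the middle equality being the cancellation $f_Y\circ f_Y^{-1}=\mathrm{id}_{\mathbb{R}}$. The second identity is then the special case $Z=X$: reading the double superscript $1_{XY}1_{YX}$ as the iterated power $\big(x^{1_{XY}}\big)^{1_{YX}}$, the composition law gives $x^{1_{XY}1_{YX}}=x^{1_{XX}}=f_X^{-1}(f_X(x))=x$, which also disposes of $x^{1_{XX}}=x$ directly since $f=f_X^{-1}\circ f_X=\mathrm{id}_X$ in that case.

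For the homomorphism properties I would expand each side of the third identity independently:
\be
\big(x\odot_X y\big)^{1_{XY}}
&=&
f_Y^{-1}\Big(f_X\big(f_X^{-1}(f_X(x)f_X(y))\big)\Big)
=
f_Y^{-1}\big(f_X(x)f_X(y)\big),\\
x^{1_{XY}}\odot_Y y^{1_{XY}}
&=&
f_Y^{-1}\Big(f_Y\big(f_Y^{-1}(f_X(x))\big)\,f_Y\big(f_Y^{-1}(f_X(y))\big)\Big)
=
f_Y^{-1}\big(f_X(x)f_X(y)\big),
\ee
so the two coincide. The fourth identity follows from the verbatim same computation with $+$ in place of the product and $\oplus$ in place of $\odot$.

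No step is a genuine obstacle: the lemma is pure bookkeeping of bijection compositions. The only points that require care are the orientation of $f=f_Y^{-1}\circ f_X$ (so that $f_X$ acts first and $f_Y^{-1}$ second), and the convention that the double exponent in the second identity denotes iteration of first powers rather than a product of index labels; once these are fixed, each displayed line collapses by cancelling a single $f_\bullet\circ f_\bullet^{-1}$ pair.
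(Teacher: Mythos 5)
Your proof is correct and follows essentially the same route as the paper's: both reduce every identity to cancellations of $f_\bullet\circ f_\bullet^{-1}$ after unpacking the definitions $x^{1_{XY}}=f_Y^{-1}\circ f_X(x)$ and $x\odot_X y=f_X^{-1}\big(f_X(x)f_X(y)\big)$. The only cosmetic difference is that for the homomorphism identities you expand both sides to a common intermediate form $f_Y^{-1}\big(f_X(x)f_X(y)\big)$, whereas the paper transforms the left side directly into the right by inserting $f_X(x)=f_Y\big(x^{1_{XY}}\big)$; the content is identical.
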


\begin{proof}
\be
(x^{1_{XY}})^{1_{YZ}}
&=&
x^{1_{XY}1_{YZ}}
=
f_{Z}^{-1}\circ f_{Y}\circ f_{Y}^{-1}\circ f_{X}(x)
\nonumber\\
&=&f_{Z}^{-1}\circ f_{X}(x)=x^{1_{XZ}},\nonumber\\
(x\odot_{X}y)^{1_{XY}}
&=&
f_{Y}^{-1}\circ f_{X}(x\odot_{X}y)
=
f_{Y}^{-1}\big(f_{X}(x)f_{X}(y)\big)
\nonumber\\
&=&
f_{Y}^{-1}\Big(f_{Y}\big(x^{1_{XY}}\big)f_{Y}\big(y^{1_{XY}}\big)\Big)\nonumber\\
&=&
x^{1_{XY}}\odot_{Y}y^{1_{XY}},\nonumber\\
(x\oplus_{X}y)^{1_{XY}}
&=&
f_{Y}^{-1}\circ f_{X}(x\oplus_{X}y)
=
f_{Y}^{-1}\big(f_{X}(x)+f_{X}(y)\big)
\nonumber\\
&=&
f_{Y}^{-1}\Big(f_{Y}\big(x^{1_{XY}}\big)+f_{Y}\big(y^{1_{XY}}\big)\Big)
=
x^{1_{XY}}\oplus_{Y}y^{1_{XY}}.\nonumber
\ee
The remaining properties are obvious. 
\end{proof}

Let $n_{Y}=f_{Y}^{-1}(n)$, with~$n\in{\mathbb N}$ being a natural number satisfying the arithmetic defined by $f_{{\mathbb R}}(x)=x$. Then, first of all, 
\be
n_{Y} &=& f_{Y}^{-1}\circ f_{{\mathbb R}}(n)=n^{1_{{\mathbb R}Y}},\nonumber\\
n &=& n_{\mathbb R}.\nonumber
\ee
More generally,
\be
(n_{Y})^{1_{YZ}}
&=&
f_{Z}^{-1}\circ f_{Y}\circ f_{Y}^{-1}(n)=f_{Z}^{-1}(n)=n_{Z},\nonumber
\ee
and, in particular,
\be
(0_{Y})^{1_{YZ}}
&=&
f_{Z}^{-1}(0)=0_{Z},\nonumber\\
(1_{Y})^{1_{YZ}}
&=&
f_{Z}^{-1}(1)=1_{Z},\nonumber
\ee
are the relations between neutral elements in ${Y}$ and  ${Z}$.

An $n$th $Y$-valued power reads
\be
x^{n_{XY}} &=& x^{1_{XY}}\odot_{Y}\dots 
\odot_{Y} x^{1_{XY}} \quad \textrm{($n$ times)},\nonumber\\
&=&
f_{{Y}}^{-1}\Big(f_{{Y}}\big(x^{1_{XY}}\big)\dots f_{{Y}}\big(x^{1_{XY}}\big)\Big)\nonumber\\
&=&
f_{{Y}}^{-1}\Big(f_{{X}}(x)\dots f_{{X}}(x)\Big)\nonumber\\
&=&
f_{{Y}}^{-1}\Big(f_{{X}}\big(x\odot_{X}\dots\odot_{X} x\big)\Big)\nonumber\\
&=&
\big(x\odot_{X}\dots\odot_{X} x\big)^{1_{XY}},\nonumber\\
x^{n_{XY}} \odot_{Y} x^{m_{XY}}
&=&
x^{(n+m)_{XY}}.\nonumber
\ee
Note that one naturally arrives at the definition of
\be
x^{n_{X}}
&=&
x\odot_{X}\dots \odot_{X} x
\nonumber\\
&=&
 x^{1_{XX}}\odot_{X}\dots 
\odot_{X} x^{1_{XX}}
=
x^{n_{XX}},\nonumber
\ee
which coincides with the definition of $x^{n_k}$ discussed earlier. So, $n_{X}$, understood as a power, can be identified with $n_{X}=f_{X}^{-1}(n)$, since
\be
n_{X}\oplus_{X}m_{X}
&=&
f_{X}^{-1}(n+m)=(n+m)_{X},\nonumber
\ee
and thus one obtains the expected relation between products and sums,
\be
x^{n_{X}}\odot_{X}x^{m_{X}}=x^{(n+m)_{X}}=x^{n_{X}\oplus_{X}m_{X}}.\nonumber
\ee
Finally, let us compute
\be
x^{n_{XY}m_{YZ}}
&=&
\big(
\underbrace{
x\odot_{X}\dots\odot_{X} x}_n\big)^{1_{XY}1_{YZ}}
\odot_{Z}
\dots
\odot_{Z}
\big(
\underbrace{x\odot_{X}\dots\odot_{X} x}_n
\big)^{1_{XY}1_{YZ}}\nonumber\\
&=&
\big(x\odot_{X}\dots\odot_{X} x\big)^{1_{XZ}}\odot_{Z}
\dots
\odot_{Z}\big(x\odot_{X}\dots\odot_{X} x\big)^{1_{XZ}}\nonumber\\
&=&
\big(x^{1_{XZ}}\odot_{Z}\dots\odot_{Z} x^{1_{XZ}}\big)\odot_{Z}
\dots
\odot_{Z}\big(x^{1_{XZ}}\odot_{Z}\dots\odot_{Z} x^{1_{XZ}}\big)\nonumber\\
&=&
x^{1_{XZ}}\odot_{Z}\dots\odot_{Z} x^{1_{XZ}}
\quad \textrm{($nm$ times)}\nonumber\\
&=&
x^{(nm)_{XZ}}\nonumber
\ee
\subsection[\appendixname~\thesubsection]{Generalized Complex~Numbers}

\label{take two}

Let $(x,y)\in{X}_1\times {X}_2$. 
The arithmetic of complex numbers is defined as
\be
x\oplus y
&=&
(x_1,x_2)\oplus (y_1,y_2)
=
(x_1\oplus_{{X}_1}y_1,x_2\oplus_{{X}_2}y_2),\nonumber\\
x\odot y
&=&
(x_1,x_2)\odot (y_1,y_2)
\nonumber\\
&=&
\left(x_1\odot_{{X}_1}y_1\ominus_{{X}_1}x_2^{1_{{X}_2{X}_1}}\odot_{{X}_1}y_2^{1_{{X}_2{X}_1}},
x_1^{1_{{X}_1{X}_2}}\odot_{{X}_2}y_2 \oplus_{{X}_2} x_2\odot_{{X}_2}y_1^{1_{{X}_1{X}_2}} \right)
\nonumber
\ee
(we feel free to represent pairs of numbers as either columns or rows). Neutral elements of multiplication and addition are given by
\be
1_{X} &=& \left(1_{{X}_1},0_{{X}_2}\right),\\
0_{X} &=& \left(0_{{X}_1},0_{{X}_2}\right).
\ee
The ``imaginary unit'' is represented by
\be
i_{X} &=& \left(0_{{X}_1},1_{{X}_2}\right).
\ee
In order to simplify notation $i_{{X}}$ will be sometimes denoted by $i'$. 
We get the standard ``$i$ squared equals minus one'' rule,
\be
i'\odot (y_1,y_2)
&=&
\left(0_{{X}_1}\odot_{{X}_1}y_1\ominus_{{X}_1}1_{{X}_2}^{1_{{X}_2{X}_1}}\odot_{{X}_1}y_2^{1_{{X}_2{X}_1}},
0_{{X}_1}^{1_{{X}_1{X}_2}}\odot_{{X}_2}y_2 \oplus_{{X}_2} 1_{{X}_2}\odot_{{X}_2}y_1^{1_{{X}_1{X}_2}} \right)\nonumber\\
&=&
\left(\ominus_{{X}_1}y_2^{1_{{X}_2{X}_1}}, y_1^{1_{{X}_1{X}_2}} \right)=(z_1,z_2),\nonumber\\
i'\odot (z_1,z_2)
&=&
\left(\ominus_{{X}_1}z_2^{1_{{X}_2{X}_1}}, z_1^{1_{{X}_1{X}_2}} \right)
=
\left(\ominus_{{X}_1}y_1^{1_{{X}_1{X}_2}1_{{X}_2{X}_1}}, \left(\ominus_{{X}_1}y_2^{1_{{X}_2{X}_1}}\right)^{1_{{X}_1{X}_2}} \right)\nonumber\\
&=&
\left(\ominus_{{X}_1}y_1,\ominus_{{X}_2}y_2\right)=i'\odot i'\odot (y_1,y_2)=\ominus (y_1,y_2).\nonumber
\ee
Thinking of the plane as a representation of complex numbers, we identify real and imaginary parts as follows:
\be
\Re x
&=&
\left(
\begin{array}{c}
x_1
\\
0_{{X}_2}
\end{array}
\right),\nonumber\\
\Im x
&=&
\left(
\begin{array}{c}
x_2^{1_{{X}_2{X}_1}}
\nonumber\\
0_{{X}_2}
\end{array}
\right),\nonumber\\
i'\Im x
&=&
i'\odot
\left(
\begin{array}{c}
x_2^{1_{{X}_2{X}_1}}
\\
0_{{X}_2}
\end{array}
\right)
=
\left(
\begin{array}{c}
0_{{X}_1}
\\
x_2
\end{array}
\right),\nonumber
\ee
(the imaginary part is also real!). Decomposition of a general complex $x$ into its real and imaginary parts can be expressed in the usual way by means of addition,
\be
x &=&
\Re x\oplus i'\Im x
=
\left(
\begin{array}{c}
x_1\\
0_{{X}_2}
\end{array}
\right)
\oplus 
i'
\odot
\left(
\begin{array}{c}
x_2{}^{1_{{X}_2{X}_1}}\\
0_{{X}_2}
\end{array}
\right)
=
\left(
\begin{array}{c}
x_1\\
0_{{X}_2}
\end{array}
\right)
\oplus 
\left(
\begin{array}{c}
\ominus_{{X}_1}0_{{X}_2}{}^{1_{{X}_2{X}_1}}
\\
x_2{}^{1_{{X}_2{X}_1}1_{{X}_1{X}_2}}
\end{array}
\right)\nonumber\\
&=&
\left(
\begin{array}{c}
x_1\ominus_{{X}_1}0_{{X}_1}\\
0_{{X}_2}\oplus_{{X}_2} x_2
\end{array}
\right)
=
\left(
\begin{array}{c}
x_1\\
x_2
\end{array}
\right).\nonumber
\ee
Complex conjugation reads
\be
x^* &=&
\Re x\ominus i'\Im x
=
\left(
\begin{array}{c}
x_1\\
0_{{X}_2}
\end{array}
\right)
\ominus
i'
\odot
\left(
\begin{array}{c}
x_2{}^{1_{{X}_2{X}_1}}\\
0_{{X}_2}
\end{array}
\right)
=
\left(
\begin{array}{c}
x_1\\
0_{{X}_2}
\end{array}
\right)
\ominus
\left(
\begin{array}{c}
\ominus_{{X}_1}0_{{X}_2}{}^{1_{{X}_2{X}_1}}
\\
x_2{}^{1_{{X}_2{X}_1}1_{{X}_1{X}_2}}
\end{array}
\right)\nonumber\\
&=&
\left(
\begin{array}{c}
x_1\oplus_{{X}_1}0_{{X}_1}\\
0_{{X}_2}\ominus_{{X}_2} x_2
\end{array}
\right)
=
\left(
\begin{array}{c}
x_1\\
\ominus_{{X}_2} x_2
\end{array}
\right).\nonumber
\ee
Modulus squared is real,
\be
x\odot x^* 
&=&
\left(
\begin{array}{c}
x_1\\
x_2
\end{array}
\right)
\odot
\left(
\begin{array}{c}
x_1\\
\ominus_{{X}_2} x_2
\end{array}
\right)
=
\left(
\begin{array}{c}
x_1^{2_{{X}_1}}\oplus_{{X}_1} x_2^{2_{{X}_2{X}_1}}\\
\ominus_{{X}_2}x_1^{1_{{X}_1{X}_2}}\odot_{{X}_2}x_2 \oplus_{{X}_2} x_2\odot_{{X}_2}x_1^{1_{{X}_1{X}_2}}
\end{array}
\right)\nonumber\\
&=&
\left(
\begin{array}{c}
x_1^{2_{{X}_1}}\oplus_{{X}_1} x_2^{2_{{X}_2{X}_1}}\\
0_{{X}_2}
\end{array}
\right).\nonumber
\ee

The definition of addition is obvious, but~let us take a closer look at multiplication. Recalling that 
$x^{1_{{X}_2{X}_1}}=f^{-1}_{{X}_1}\circ f_{{X}_2}(x)$, 
$x^{1_{{X}_1{X}_2}}=f^{-1}_{{X}_2}\circ f_{{X}_1}(x)$,
we rephrase real and imaginary parts of the product as 
\be
(x\odot y)_1
&=&
f^{-1}_{{X}_1}
\Bigg(f_{{X}_1}(x_1)f_{{X}_1}(y_1)
-
f_{{X}_1}\left(x_2^{1_{{X}_2{X}_1}}\right)f_{{X}_1}\left(y_2^{1_{{X}_2{X}_1}}\right)
\Bigg)
\nonumber\\
&=&
f^{-1}_{{X}_1}
\Big(f_{{X}_1}(x_1)f_{{X}_1}(y_1)
-
f_{{X}_2}(x_2)f_{{X}_2}(y_2)
\Big)
=f^{-1}_{{X}_1}\big(\Re(\tilde x\tilde y)\big),\label{xy1}
\\
(x\odot y)_2
&=&
f^{-1}_{{X}_2}
\Bigg(
f_{{X}_2}\left(x_1^{1_{{X}_1{X}_2}}\right)f_{{X}_2}(y_2)
+
f_{{X}_2}(x_2)f_{{X}_2}\left(y_1^{1_{{X}_1{X}_2}} \right)
\Bigg)
\nonumber\\
&=&
f^{-1}_{{X}_2}
\Big(
f_{{X}_1}(x_1)f_{{X}_2}(y_2)
+
f_{{X}_2}(x_2)f_{{X}_1}(y_1)
\Big)
=
f^{-1}_{{X}_2}\big(\Im(\tilde x\tilde y)\big),
\label{xy2}
\ee
where $\tilde x=f_{{X}_1}(x_1)+if_{{X}_2}(x_2)=\tilde x_1+i\tilde x_2$, etc.
Analogously,
\be
(x\oplus y)_1
&=&
f^{-1}_{{X}_1}
\Big(f_{{X}_1}(x_1)+f_{{X}_1}(y_1)\Big)
=
f^{-1}_{{X}_1}\big(\Re(\tilde x+\tilde y)\big),
\nonumber\\
(x\oplus y)_2
&=&
f^{-1}_{{X}_2}
\Big(f_{{X}_2}(x_2)+f_{{X}_2}(y_2)\Big)
=f^{-1}_{{X}_2}\big(\Im(\tilde x+\tilde y)\big).
\nonumber
\ee
Accordingly,
\be
x\odot y
&=&
\Big(f^{-1}_{{X}_1}\big(\Re(\tilde x\tilde y)\big),f^{-1}_{{X}_2}\big(\Im(\tilde x\tilde y)\big)\Big),\\
x\oslash y
&=&
\Big(f^{-1}_{{X}_1}\big(\Re(\tilde x/\tilde y)\big),f^{-1}_{{X}_2}\big(\Im(\tilde x/\tilde y)\big)\Big),\\
x\oplus y
&=&
\Big(f^{-1}_{{X}_1}\big(\Re(\tilde x+\tilde y)\big),f^{-1}_{{X}_2}\big(\Im(\tilde x+\tilde y)\big)\Big),\\
x\ominus y
&=&
\Big(f^{-1}_{{X}_1}\big(\Re(\tilde x-\tilde y)\big),f^{-1}_{{X}_2}\big(\Im(\tilde x-\tilde y)\big)\Big).
\ee
One can still further simplify operations with complex numbers. Note that 
\be
\left(
\begin{array}{c}
x_1\\
x_2
\end{array}
\right)
&=&
\left(
\begin{array}{c}
f^{-1}_{{X}_1}\circ f_{{X}_1}(x_1)\\
f^{-1}_{{X}_2}\circ f_{{X}_2}(x_2)
\end{array}
\right)
=
\left(
\begin{array}{c}
f^{-1}_{{X}_1}(\Re \tilde x)\\
f^{-1}_{{X}_2}(\Im \tilde x)
\end{array}
\right),
\ee
so that 
\be
\left(
\begin{array}{c}
f^{-1}_{{X}_1}(\Re \tilde x)\\
f^{-1}_{{X}_2}(\Im \tilde x)
\end{array}
\right)
\odot
\left(
\begin{array}{c}
f^{-1}_{{X}_1}(\Re \tilde y)\\
f^{-1}_{{X}_2}(\Im \tilde y)
\end{array}
\right)
&=&
\left(
\begin{array}{c}
f^{-1}_{{X}_1}\big(\Re (\tilde x\tilde y)\big)\\
f^{-1}_{{X}_2}\big(\Im (\tilde x\tilde y)\big)
\end{array}
\right),\label{Simple 1}\\
\left(
\begin{array}{c}
f^{-1}_{{X}_1}(\Re \tilde x)\\
f^{-1}_{{X}_2}(\Im \tilde x)
\end{array}
\right)
\oplus
\left(
\begin{array}{c}
f^{-1}_{{X}_1}(\Re \tilde y)\\
f^{-1}_{{X}_2}(\Im \tilde y)
\end{array}
\right)
&=&
\left(
\begin{array}{c}
f^{-1}_{{X}_1}\big(\Re (\tilde x+\tilde y)\big)\\
f^{-1}_{{X}_2}\big(\Im (\tilde x+\tilde y)\big)
\end{array}
\right),\label{Simple 2}
\ee
which are, perhaps, the~most convenient forms of generalized complex arithmetic.
The operations $\Re$ and $\Im$ are denoted by identical symbols no matter which arithmetic is used. This will not lead to inconsistencies and should be clear from a context. The~standard Diophantine complex numbers are denoted either as $x=x_1+ix_2$ or $x=(x_1,x_2)$ and it is understood that the two notations mean the same, i.e.,~it is allowed to write \mbox{$x_1+ix_2=(x_1,x_2)$}.

\begin{lemma}
$\oplus$ and $\odot$ are associative and commutative, and~$\oplus$ is distributive with respect to $\odot$.
\label{complex nDA}
\end{lemma}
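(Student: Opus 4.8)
The plan is to recognize the generalized complex numbers as ordinary complex numbers written in disguised coordinates, and then let all the algebraic laws descend for free. Concretely, I will exhibit a bijection $\Phi$ from $X_1\times X_2$ onto $\mathbb{C}$ that intertwines $\oplus$ with ordinary complex addition and $\odot$ with ordinary complex multiplication; since $(\mathbb{C},+,\cdot)$ is a commutative field, associativity, commutativity and the distributive law relating $\odot$ and $\oplus$ all transport back along $\Phi$.

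First I would set, for $x=(x_1,x_2)\in X_1\times X_2$, $\Phi(x)=\tilde x:=f_{X_1}(x_1)+i\,f_{X_2}(x_2)\in\mathbb{C}$. Since $f_{X_1}$ and $f_{X_2}$ are bijections onto $\mathbb{R}$, the map $\Phi$ is a bijection, with inverse $\Phi^{-1}(u)=\bigl(f_{X_1}^{-1}(\Re u),\,f_{X_2}^{-1}(\Im u)\bigr)$. Next I would check that $\Phi$ is a homomorphism. For addition this is immediate from the componentwise definition of $\oplus$, giving $\Phi(x\oplus y)=\tilde x+\tilde y$. For multiplication I would simply invoke the identities already established in this Appendix, namely (\ref{xy1})--(\ref{xy2}) (equivalently (\ref{Simple 1})), which state precisely that $(x\odot y)_1=f_{X_1}^{-1}\bigl(\Re(\tilde x\tilde y)\bigr)$ and $(x\odot y)_2=f_{X_2}^{-1}\bigl(\Im(\tilde x\tilde y)\bigr)$, i.e.\ $\Phi(x\odot y)=\tilde x\,\tilde y$. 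The one delicate point, and essentially the only obstacle, is the appearance of the ``first power'' maps in the definition of $\odot$: one must use $x^{1_{X_2X_1}}=f_{X_1}^{-1}\circ f_{X_2}(x)$, hence $f_{X_1}\bigl(x_2^{1_{X_2X_1}}\bigr)=f_{X_2}(x_2)$ and symmetrically with $X_1,X_2$ interchanged, which is exactly what turns the defining expression for $x\odot y$ into the ordinary complex product of $\tilde x$ and $\tilde y$. This bookkeeping with the conversion maps is licensed by the Lemma on powers proved above.

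With $\Phi$ a bijective homomorphism, the conclusion is automatic. To prove, for instance, associativity of $\odot$, I would apply $\Phi$ to $(x\odot y)\odot z$ and to $x\odot(y\odot z)$; using the homomorphism property twice, both images equal $(\tilde x\tilde y)\tilde z=\tilde x(\tilde y\tilde z)$ in $\mathbb{C}$, so by injectivity of $\Phi$ the two generalized products coincide. Commutativity of $\oplus$ and of $\odot$, associativity of $\oplus$, and the distributive law follow in exactly the same way from the corresponding identities in $\mathbb{C}$. As a byproduct the neutral-element and imaginary-unit rules stated just before the lemma are recovered as well, since $\Phi(0_X)=0$, $\Phi(1_X)=1$, and $\Phi(i')=i$.
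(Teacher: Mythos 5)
Your proposal is correct and takes essentially the same route as the paper: the paper declares the lemma an immediate consequence of (\ref{Simple 1}) and (\ref{Simple 2}), and your argument merely makes explicit the implicit reasoning, namely that those formulas exhibit $\Phi^{-1}(u)=\bigl(f_{X_1}^{-1}(\Re u),f_{X_2}^{-1}(\Im u)\bigr)$ as a bijective homomorphism from $(\mathbb{C},+,\cdot)$ onto $(X_1\times X_2,\oplus,\odot)$, so associativity, commutativity, and distributivity transport back automatically.
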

The proof is an immediate consequence of (\ref{Simple 1}) and (\ref{Simple 2}).

\subsection[\appendixname~\thesubsection]{Complex-Valued Scalar Product via Non-Newtonian Integration}

A complex-valued function is defined by the diagram
\be
\begin{array}{rcl}
{X}                & \stackrel{A}{\longrightarrow}       & {Y}={Y}_1\times {Y}_2               \\
f_{X}{\Big\downarrow}   &                                     & {\Big\downarrow}f_{Y}=f_{{Y}_1}\times   f_{{Y}_2} \\
{\mathbb R}                & \stackrel{\tilde A}{\longrightarrow}   & {\mathbb R}^2
\end{array}.
\label{l or r}
\ee
Consider two functions $A,B:{X}\to {{Y}_1}\times {{Y}_2}$ associated with diagrams of the form (\ref{l or r}). 
Let $\tilde A(r)=\tilde A_1(r)+i \tilde A_2(r)$, $\tilde B(r)=\tilde B_1(r)+i \tilde B_2(r)$, and~\be
\langle \tilde A|\tilde B\rangle
=
\int_{-f_{X}(T)/2}^{f_{X}(T)/2}
\overline{\tilde A(r)}\tilde B(r)
{\textrm d}r.\nonumber
\ee
$f_{X}(T)$ can be finite or infinite. 
The scalar product of two functions $A,B:{X}\to {{Y}_1}\times {{Y}_2}$ is defined as
\be
\langle A|B\rangle
&=&
\int_{\ominus_{{X}}T\oslash_{{X}} 2_{{X}}}
^{T\oslash_{{X}} 2_{{X}}}A(x)^*\odot B(x){\textrm D}x.
\ee
The same symbol of the scalar product for both $\langle \tilde A|\tilde B\rangle$ and $\langle A|B\rangle$ will not lead \mbox{to~ambiguities}.
\begin{lemma}
(Properties of the scalar product)
\be
\langle A|B\rangle
&=&
\left(
\begin{array}{c}
f^{-1}_{{Y}_1}\big(\Re\langle \tilde A|\tilde B\rangle\big)\\
f^{-1}_{{Y}_2}\big(\Im\langle \tilde A|\tilde B\rangle\big)
\end{array}
\right),\\
\langle A|B\rangle^*
&=&
\langle B|A\rangle,\\
\langle A|B\oplus C\rangle
&=&
\langle A|B\rangle\oplus \langle A|C\rangle,\\
\langle A|\lambda\odot B\rangle
&=&
\lambda\odot \langle A|B\rangle.
\ee
\end{lemma}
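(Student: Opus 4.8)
The plan is to push every assertion down to the ordinary complex $L^2$ scalar product $\langle\tilde A|\tilde B\rangle$ through the bijections $f_{Y_1},f_{Y_2}$ (and through $f_X$ on the integration variable), using the commutative diagram (\ref{l or r}) and the ``simplified'' forms (\ref{Simple 1})--(\ref{Simple 2}) of generalized complex arithmetic; the first identity does all the work and the remaining three then fall out by transport of structure. To prove the first identity, set $\tilde A=f_Y\circ A\circ f_X^{-1}$ and $\tilde B=f_Y\circ B\circ f_X^{-1}$ as in (\ref{l or r}). Complex conjugation on $Y_1\times Y_2$ corresponds under the tilde map to ordinary conjugation, and (\ref{Simple 1}) says that $\odot$ corresponds to ordinary complex multiplication, so $\widetilde{A^*\odot B}=\overline{\tilde A}\,\tilde B$ pointwise. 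Unwinding the definition of the non-Newtonian integral componentwise in $Y_1$ and $Y_2$, i.e.\ applying (\ref{integr}) in each slot, the integral $\int_{\ominus_X T\oslash_X 2_X}^{T\oslash_X 2_X}A(x)^*\odot B(x)\,{\rm D}x$ equals the pair whose first entry is $f^{-1}_{Y_1}\big(\int_{-f_X(T)/2}^{f_X(T)/2}\Re(\overline{\tilde A(r)}\tilde B(r))\,{\rm d}r\big)$ and whose second entry is $f^{-1}_{Y_2}\big(\int_{-f_X(T)/2}^{f_X(T)/2}\Im(\overline{\tilde A(r)}\tilde B(r))\,{\rm d}r\big)$, where I use $f_X(T\oslash_X 2_X)=f_X(T)/2$ and $f_X(\ominus_X T\oslash_X 2_X)=-f_X(T)/2$. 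Since $\Re$ and $\Im$ commute with the ordinary Riemann/Lebesgue integral, the two entries become $f^{-1}_{Y_1}(\Re\langle\tilde A|\tilde B\rangle)$ and $f^{-1}_{Y_2}(\Im\langle\tilde A|\tilde B\rangle)$, which is the claim.

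The other three properties follow by combining the first identity with the explicit component formulas for $*$, $\oplus$, $\odot$ on $Y_1\times Y_2$. For conjugation symmetry I would use $\ominus_{Y_2}f^{-1}_{Y_2}(u)=f^{-1}_{Y_2}(-u)$ (valid because $f_{Y_2}$ is an arithmetic isomorphism and $0_{Y_2}=f^{-1}_{Y_2}(0)$) to get $\langle A|B\rangle^*=\big(f^{-1}_{Y_1}(\Re\langle\tilde A|\tilde B\rangle),\,f^{-1}_{Y_2}(-\Im\langle\tilde A|\tilde B\rangle)\big)$, and then $\overline{\langle\tilde A|\tilde B\rangle}=\langle\tilde B|\tilde A\rangle$ identifies the right-hand side with $\langle B|A\rangle$. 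Additivity follows from $\widetilde{B\oplus C}=\tilde B+\tilde C$ (Eq.~(\ref{Simple 2})), hence $\langle\tilde A|\widetilde{B\oplus C}\rangle=\langle\tilde A|\tilde B\rangle+\langle\tilde A|\tilde C\rangle$, together with $f^{-1}_{Y_j}(a+b)=f^{-1}_{Y_j}(a)\oplus_{Y_j}f^{-1}_{Y_j}(b)$ and the componentwise definition of $\oplus$. Homogeneity is the same argument with (\ref{Simple 1}): for a constant $\lambda$ one has $\widetilde{\lambda\odot B}=\tilde\lambda\,\tilde B$, so the ordinary constant $\tilde\lambda$ pulls out of $\langle\tilde A|\tilde B\rangle$, and the first identity together with (\ref{Simple 1}) repackages $\tilde\lambda\langle\tilde A|\tilde B\rangle$ as $\lambda\odot\langle A|B\rangle$.

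The one point that needs care is the first identity, and there the difficulty is purely bookkeeping: one must verify that the non-Newtonian integral of a $(Y_1\times Y_2)$-valued function genuinely splits into the two ordinary integrals of the real and imaginary parts of its tilde-transform with transformed endpoints $\pm f_X(T)/2$, i.e.\ that $f_{Y_1}$ and $f_{Y_2}$ act independently on the two real components and that $\Re,\Im$ pass through the Riemann integral. Once that is pinned down, the remaining properties are automatic, so no serious obstacle survives.
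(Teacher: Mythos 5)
Your proposal is correct and follows essentially the same route as the paper: compute the tilde-transform of the integrand $A^*\odot B$ (the paper does this by explicit componentwise expansion, you invoke the transport-of-structure property $\widetilde{A^*\odot B}=\overline{\tilde A}\,\tilde B$, which is what that expansion establishes), then split the non-Newtonian integral into two ordinary integrals with endpoints $\pm f_X(T)/2$, and finally pull back through $f_{Y_1}^{-1}$, $f_{Y_2}^{-1}$. The conjugation argument via $\ominus_{Y_2}f_{Y_2}^{-1}(u)=f_{Y_2}^{-1}(-u)$ is identical, and you merely spell out in more detail the last two properties which the paper dismisses in one line as consequences of associativity, distributivity, and linearity of the integral.
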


\begin{proof}

The integrand
\be
A(x)^*\odot B(x)
&=&
\left(
\begin{array}{c}
A_1(x)\\
\ominus_{{Y}_2}A_2(x)
\end{array}
\right)
\odot
\left(
\begin{array}{c}
B_1(x)\\
B_2(x)
\end{array}
\right)
\nonumber\\
&=&
\left(
\begin{array}{c}
f^{-1}_{{Y}_1}
\Big(f_{{Y}_1}\big(A_1(x)\big)f_{{Y}_1}\big(B_1(x)\big)
-
f_{{Y}_2}\big(\ominus_{{Y}_2}A_2(x)\big)f_{{Y}_2}\big(B_2(x)\big)
\Big)\\
f^{-1}_{{Y}_2}
\Big(
f_{{Y}_1}\big(A_1(x)\big)f_{{Y}_2}\big(B_2(x)\big)
+
f_{{Y}_2}\big(\ominus_{{Y}_2}A_2(x)\big)f_{{Y}_1}\big(B_1(x)\big)
\Big)
\end{array}
\right)
\nonumber\\
&=&
\left(
\begin{array}{c}
f^{-1}_{{Y}_1}
\Big(\tilde A_1\big(f_{{X}}(x)\big)\tilde B_1\big(f_{{X}}(x)\big)
+
\tilde A_2\big(f_{{X}}(x)\big)\tilde B_2\big(f_{{X}}(x)\big)
\Big)\\
f^{-1}_{{Y}_2}
\Big(
\tilde A_1\big(f_{{X}}(x)\big)\tilde B_2\big(f_{{X}}(x)\big)
-
\tilde A_2\big(f_{{X}}(x)\big)\tilde B_1\big(f_{{X}}(x)\big)
\Big)
\end{array}
\right)
\nonumber\\
&=&
\left(
\begin{array}{c}
f^{-1}_{{Y}_1}
\Big(\tilde C_1\big(f_{{X}}(x)\big)
\Big)\\
f^{-1}_{{Y}_2}
\Big(
\tilde C_2\big(f_{{X}}(x)\big)
\Big)
\end{array}
\right)
=C(x)
\nonumber.
\ee
{So,} 
{\small \be
\langle A|B\rangle
&=&
\int_{\ominus_{{X}}T\oslash_{{X}} 2_{{X}}}
^{T\oslash_{{X}} 2_{{X}}}C(x){\textrm D}x
=
\left(
\begin{array}{c}
f^{-1}_{{Y}_1}\left(\int_{f_{{X}}(\ominus_{{X}}T\oslash_{{X}} 2_{{X}})}^{f_{{X}}(T\oslash_{{X}} 2_{{X}})}\tilde C_1(r){\textrm d}r\right)\\
f^{-1}_{{Y}_1}\left(\int_{f_{{X}}(\ominus_{{X}}T\oslash_{{X}} 2_{{X}})}^{f_{{X}}(T\oslash_{{X}} 2_{{X}})}\tilde C_2(r){\textrm d}r\right)
\end{array}
\right)\nonumber\\
&=&
\left(
\begin{array}{c}
f^{-1}_{{Y}_1}\left(\int_{-f_{{X}}(T)/2}^{f_{{X}}(T)/2}\tilde C_1(r){\textrm d}r\right)\\
f^{-1}_{{Y}_2}\left(\int_{-f_{{X}}(T)/2}^{f_{{X}}(T)/2}\tilde C_2(r){\textrm d}r\right)
\end{array}
\right)
=
\left(
\begin{array}{c}
f^{-1}_{{Y}_1}\left(\int_{-f_{{X}}(T)/2}^{f_{{X}}(T)/2}\Re\left(\overline{\tilde A(r)}\tilde B(r)\right){\textrm d}r\right)\\
f^{-1}_{{Y}_2}\left(\int_{-f_{{X}}(T)/2}^{f_{{X}}(T)/2}\Im\left(\overline{\tilde A(r)}\tilde B(r)\right){\textrm d}r\right)
\end{array}
\right)
\\
&=&
\left(
\begin{array}{c}
f^{-1}_{{Y}_1}\big(\Re\langle \tilde A|\tilde B\rangle\big)\\
f^{-1}_{{Y}_2}\big(\Im\langle \tilde A|\tilde B\rangle\big)
\end{array}
\right).\nonumber
\ee}
By definition of complex conjugation
{\small\be
\langle A|B\rangle^*
&=&
\left(
\begin{array}{c}
f^{-1}_{{Y}_1}\big(\Re\langle \tilde A|\tilde B\rangle\big)\\
\ominus_{{Y}_2}f^{-1}_{{Y}_2}\big(\Im\langle \tilde A|\tilde B\rangle\big)
\end{array}
\right)
=
\left(
\begin{array}{c}
f^{-1}_{{Y}_1}\big(\Re\langle \tilde A|\tilde B\rangle\big)\\
f^{-1}_{{Y}_2}\big(-\Im\langle \tilde A|\tilde B\rangle\big)
\end{array}
\right)
=
\left(
\begin{array}{c}
f^{-1}_{{Y}_1}\big(\Re\langle \tilde B|\tilde A\rangle\big)\\
f^{-1}_{{Y}_2}\big(\Im\langle \tilde B|\tilde A\rangle\big)
\end{array}
\right)\nonumber\\
&=&
\langle B|A\rangle.
\ee}The remaining two properties follow from associativity and distributivity of the arithmetic operations, supplemented by linearity of { the integral.} 
\end{proof}
\subsection[\appendixname~\thesubsection]{Continuous Transition Between Two Levels of the~Hierarchy}

There exists a simple generalization of the hierarchies,  allowing for a continuous transition between the levels. It is based on the~following
\begin{lemma}
Consider a collection (finite or not) of parameters $\lambda_j$, $\sum_j\lambda_j=1$, and~a collection of bijections $g_j$ that satisfy  Lemma~1. Then
$
g=\sum_j\lambda_j g_j
$
also satisfies  Lemma~1.
\end{lemma}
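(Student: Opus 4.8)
The plan is to reduce the statement entirely to the characterization in Lemma~1. First I would invoke that characterization to write each $g_j$ as $g_j(p)=\frac{1}{2}+h_j\!\left(p-\frac{1}{2}\right)$ with $h_j:[-1/2,1/2]\to[-1/2,1/2]$ odd, and then simply add them:
\be
g(p)
&=&
\sum_j\lambda_j g_j(p)
=\Big(\sum_j\lambda_j\Big)\frac{1}{2}+\sum_j\lambda_j h_j\!\left(p-\frac{1}{2}\right)
=\frac{1}{2}+h\!\left(p-\frac{1}{2}\right),
\ee
where $h:=\sum_j\lambda_j h_j$ and I have used $\sum_j\lambda_j=1$. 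At this point everything reduces to checking that $h$ is again an odd self-map of $[-1/2,1/2]$; granting that, the ``if'' direction of Lemma~1 immediately delivers $g:[0,1]\to[0,1]$ together with $g(p)+g(1-p)=1$ and the fixed point $g(1/2)=1/2$ (the latter is also visible directly, since $g(1/2)=\sum_j\lambda_j g_j(1/2)=\frac{1}{2}\sum_j\lambda_j$).

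Oddness of $h$ is routine and I would dispatch it in one line: $h(-x)=\sum_j\lambda_j h_j(-x)=-\sum_j\lambda_j h_j(x)=-h(x)$, because each $h_j$ is odd. The one point that genuinely needs care --- and the only real obstacle --- is that $h$ must take values in $[-1/2,1/2]$ (equivalently, that $g$ stays $[0,1]$-valued). For this I would use that the $\lambda_j$ are nonnegative, so that $\sum_j\lambda_j=1$ makes them a partition of unity: for each fixed $x$, $h(x)=\sum_j\lambda_j h_j(x)$ is then a convex combination of points of $[-1/2,1/2]$, hence lies in $[-1/2,1/2]$. If one insists on keeping the $\lambda_j$ merely summing to $1$ without a sign restriction the claim fails --- for instance $\lambda_1=2$, $\lambda_2=-1$ with $g_1(p)=p$, $g_2(p)=1-p$ gives $g(p)=3p-1\notin[0,1]$ near $p=0$ --- so nonnegativity is the natural hypothesis, precisely the one relevant to the ``continuous transition'' reading.

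Finally, in the infinite case I would note the small extra wrinkle that the series $\sum_j\lambda_j g_j(p)$ must converge: this is immediate by comparison, since $0\le\lambda_j g_j(p)\le\lambda_j$ and $\sum_j\lambda_j=1<\infty$, so the series converges absolutely and uniformly in $p$, which in turn legitimizes splitting the sum and pulling out $\sum_j\lambda_j$ in the display above. Assembling the three observations --- $h$ odd, $h$ valued in $[-1/2,1/2]$, and the defining series convergent --- and applying Lemma~1 in the reverse direction completes the argument.
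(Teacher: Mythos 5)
Your argument matches the paper's: decompose each $g_j$ via Lemma~1, use $\sum_j\lambda_j=1$ to recover the constant $\frac{1}{2}$, and observe that $h=\sum_j\lambda_j h_j$ is odd as a linear combination of odd functions. You additionally address two points the paper's proof leaves implicit — nonnegativity of the $\lambda_j$ (needed so that $h$ stays valued in $[-1/2,1/2]$, as your counterexample $\lambda_1=2,\lambda_2=-1$ shows it is not automatic) and convergence of the series in the infinite case — which makes your version strictly tighter.
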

\begin{proof}
Each $g_j$ can be written as
\be
g_j(p)=\frac{1}{2} + h_j\left(p-\frac{1}{2}\right),
\ee
where $h_j(-x)=-h_j(x)$, so
\be
g(p)=
\sum_j\lambda_j g_j(p)
=
\underbrace{\frac{1}{2}\sum_j\lambda_j}_{1/2}  + \underbrace{\sum_j\lambda_j h_j\left(p-\frac{1}{2}\right)}_{h\left(p-1/2\right)}.
\ee
The function 
$
h(x)=\sum_j\lambda_j h_j(x)
$
is odd, as~a linear combination of odd functions.
\end{proof}

As a side remark, let us note that  $g(p)=\sin^2\frac{\pi}{2}p$ satisfies all the concavity and monotonicity properties required for the existence of non-integer iterations $g^r$, $r\not\in\mathbb{Z}$, (cf.~\cite{Kuczma}, Chapter XV). The~problem is intriguing as an alternative possibility of continuously switching between quantum and subquantum levels of the hierarchy (cf. the discussion of zero-multiplier iterative roots  in~\cite{KCG}, Chapter 11.5).



\begin{thebibliography}{999}

\bibitem{CD}Czachor, M.; Doebner, H.-D. Correlation experiments in nonlinear quantum mechanics. {\em Phys. Lett. A} {\bf 2002},  {\em 301}, 139--152.
\bibitem{Gisin}Gisin, N.  Stochastic quantum dynamics and relativity. {\em Helv. Phys. Acta\/} {\bf 1989}, {\em 62}, 363--371.
\bibitem{Gisin'}Gisin, N.  Weinberg's non-linear 
 quantum mechanics and supraluminal communications. {\em Phys. Lett. A} {\bf 1990}, {\em 143}, 1--2.
\bibitem{Polchinski}Polchinski, J. Weinberg’s nonlinear quantum mechanics and the Einstein-Podolsky-Rosen paradox, {\em Phys. Rev. Lett.\/} {\bf 1991}, {\em 66}, 397--400.
\bibitem{Jordan}Jordan, T.F. Reconstructing a nonlinear dynamical framework for testing quantum mechanics. {\em Ann. Phys.\/} {\bf 1993}, {\em 225}, 83--113.
\bibitem[Jaynes(1983)]{Jaynes}Jaynes, E.T.  {\em Probability Theory. The Logic of Science\/};  Bretthorst, G.L., Ed.; Cambridge University Press: Cambridge, {UK,} 
 2003. 
\bibitem{Aczel}Acz\'el, J. {\it Lectures on Functional Equations and Their Applications\/}; Academic Press:  New York, {NY, USA,} 1966.
\bibitem{Cox}Cox, R.T. {\it The Algebra of Probable Inference\/}; Johns Hopkins University Press: Baltimore, {MD, USA}, 1961.
\bibitem{PR}Penrose, R.; Rindler, W. {\em Spinors and Space-Time. Volume 1: Two-Spinor Calculus and Relativistic Fields\/}; Cambridge University Press: Cambridge, {UK}, 1984.
\bibitem[Czachor(2020)]{MCEntropy}Czachor, M. Unifying aspects of generalized calculus.    {\em Entropy} {\bf 2020}, {\em 22}, 1180.
\bibitem{Huffman}Cormen, T.H.; Leiserson, ; C.E.; Rivest, R.L.; Stein, C. {\it Introduction to Algorithms\/}, 4th ed.; MIT Press and McGraw-Hill:  {Cambridge, MA, USA,} 
 2022.
\bibitem[Czachor(2021)]{MCAPPA2021}Czachor, M. Arithmetic loophole in Bell's Theorem: Overlooked threat to entangled-state quantum cryptography. {\em Acta Phys. Polon. A} {\bf 2021}, {\em 139}, 70--83.



\bibitem[Czachor \& Nalikowski(2024)]{MCKN}Czachor, M.; Nalikowski, K. Imitating quantum probabilities: Beyond Bell's theorem and Tsirelson bounds. {\em Found. Sci.} {\bf 2024}, {\em 29}, 281--305.
\bibitem[Czachor(2023)]{MCAPPA2023}Czachor, M. Contra Bellum: Bell's theorem as a confusion of languages. {\em Acta Phys. Polon. A} {\bf 2023}, {\em 143}, S158--S170.


\bibitem[Cirel'son(1980)]{Tsirelson}Cirel'son, B.S.  Quantum generalizations of Bell's inequality. {\em Lett. Math. Phys.\/} {\bf 1980}, {\em 4}, 93--100. 

\bibitem[Grossman \& Katz(1972)]{GK}Grossman, M.;  Katz, R. {\em Non-Newtonian Calculus\/}; Lee Press: Pigeon Cove, {MA, USA,} 
 1972.
\bibitem[Grossman(1979)]{G79}Grossman, M.  {\em The First Nonlinear System of Differential and Integral Calculus\/}, Mathco: Rockport, {TX, USA,} {1979.} 
\bibitem{G79'}Grossman, M.   {\em Bigeometric Calculus: A System with Scale-Free Derivative\/}; Archimedes Foundation: Rockport,  {TX, USA,} 1983. 
\bibitem[Pap(1993)]{P}Pap, E.  g-calculus. {\em Zb. Rad. Prirod.-Fak. Ser. Mat.\/} {\bf 1993}, {\em 23}, 145--156.
\bibitem{P'}Pap, E. Generalized real analysis and its applications. {\em Int. J. Approx. Reason.\/} {\bf 2008}, {\em 47}, 368--386.
\bibitem[Grabisch et al.(2009)]{GMMP}Grabisch, M.; Marichal, J.-L.; Mesiar, R.; Pap, E. {\em Aggregation Functions\/}; Cambridge University Press, Cambridge, {UK}, 2009.

\bibitem[Burgin \& Czachor(2020)]{BC}
Burgin, M.; Czachor, M. {\em Non-Diophantine Arithmetics in Mathematics, Physics, and Psychology\/}; World Scientific: Singapore, 2020.
\bibitem[Czachor(2019)]{Czachor2019}Czachor, M.  Waves along fractal coastlines: From fractal arithmetic to wave equations. {\em Acta Phys. Polon. B} {\bf  2019}, {\em 50}, 813--831.
\bibitem{fuzzy}Zimmermann,  H.-J. {\it Fuzzy Set Theory---and its Applications\/}, 3rd ed.; Kluwer: Boston, {MA, USA,} 1996.
\bibitem{fuzzy1}Dubois,  D.; Prade, H.  Towards fuzzy differential calculus. Part 1: Integration of fuzzy mappings. {\em Fuzzy Sets Syst.} {\bf 1982}, {\em 8}, 1--17.
\bibitem{fuzzy1'}Dubois,  D.; Prade, H.  Towards fuzzy differential calculus. 
Part 2: Integration on fuzzy intervals. {\em Fuzzy Sets Syst.} {\bf 1982}, {\em 8}, 105--116.
\bibitem{fuzzy1''}Dubois,  D.; Prade, H.  Towards fuzzy differential calculus. Part 3: Differentiation. {\em Fuzzy Sets Syst.} {\bf 1982}, {\em 8},  225--233.

\bibitem{Pap2020}Zhang, D.; Mesiar, R.; Pap, E. Pseudo-integral and generalized Choquet integral. {\em Fuzzy Sets Syst.} {\bf 2022}, {\em 446}, 193--221.

\bibitem[Czachor \& Naudts(2002)]{CN}Czachor, M.; Naudts, J. Thermostatistics based on Kolmogorov-Nagumo averages: Unifying framework for extensive and nonextensive generalizations. {\em Phys. Lett. A} {\bf 2002}, {\em 298}, 369--374.
\bibitem[Jizba \& Arimitsu(2004)]{JA}Jizba, P.; Arimitsu, T.  Observability of R\'enyi’s entropy, {\em Phys. Rev. E} {\bf 2004}, {\em 69}, 026128. \bibitem{JA'}Jizba, P.; Arimitsu, T. The world
 according to R\'enyi: Thermodynamics of fractal systems. 
{\em AIP Conf. Proc.} {\bf 2001}, {\em 597}, 341.
\bibitem[Jizba \& Korbel(2020)]{JK2020}Jizba, P.; Korbel, J. When Shannon and Khinchin meet Shore and Johnson: Equivalence of information theory and statistical inference axiomatics. {\em Phys. Rev. E} {\bf 2020}, {\em 101}, 042126.


\bibitem[R\'enyi(1960)]{R}R\'enyi, A.  Some fundamental questions of information theory. {\it MTA III. Oszt. K\"ozl.} {\bf 1960}, {\em 10}, 251--282; Reprinted in  {\em Selected Papers of Alfr\'ed R\'enyi\/}, Tur\'an, P., Ed.; Akad\'emiai Kiad\'o: Budapest,  {Hungary}, 1976.



\bibitem[Kolmogorov(1930)]{K1930}Kolmogorov, A.N. Sur la notion de la moyenne. {\em Atti. Acad. Naz. Lincei. Rend.} {\bf 1930}, {\em 12}, 388--391; Reprinted in {\em Selected Works of A. N. Kolmogorov. Vol.1. Mathematics and Mechanics}; Tikhomirov, V. M., Ed.; Kluwer: Dordrecht, {The Netherlands,} 1991.
\bibitem[Nagumo(1930)]{N1930}Nagumo, M. \"Uber eine Klasse der Mittelwerte, {\em Jpn. J. Math.} {\bf 1930}, {\em 7}, 71--79; Reprinted in {\em Mitio Nagumo Collected Papers};  Yamaguti, M., Nirenberg, L., Mizohata, S., Sibuya, Y., Eds.; Springer: Tokyo,  {Japan}, 1993.

\bibitem[Naudts(2013)]{N2013}Naudts, J. {\em Generalized Thermostatistics\/}; Springer: London, {UK}, 2011.

\bibitem[Shannon(1948)]{Shannon}Shannon, C.E. A mathematical theory of communication. {\em Bell Syst. Tech. J.}
{\bf 1948}, {\em 27}, 379--423, 623--653.

\bibitem[Bell(1964)]{Bell} Bell, J.S. On the Einstein-Podolsky-Rosen paradox. {\em Physics} {\bf 1964}, {\em 1}, 195.

\bibitem[Clauser \& Horne(1974)]{CH}Clauser, J.F.; Horne, M.A.  Experimental consequences of objective local theories. {\em Phys. Rev. D} {\bf 1974}, {\em 10}, 526.
\bibitem{Fubini}Fubini, G. Sulle metriche definite da una forme Hermitiana.  {\em Atti Istit. Veneto} {\bf 1904},  {\em 63},  502--513.
\bibitem{Study}Study, E. K\"urzeste Wege im komplexen Gebiet. {\em Math. Ann}.  {\bf 1905}, {\em 60}, 321-378.
\bibitem{Lane}Field, T.R.; Hughston, L.P. The geometry of coherent states. {\em J. Math. Phys.} {\bf 1999}, {\em 40}, 2568--2583. 
\bibitem{Brody}Brody, D.C.; Hughston, L.P. Geometric quantum mechanics. {\em J. Geom. Phys.} {\bf 2001}, {\em 38}, 19--53.
\bibitem{Karol}Bengtsson, I.; \.Zyczkowski, K. {\it Geometry of Quantum States. An Introduction to Quantum Entanglement\/}; Cambridge University Press: Cambridge,  {UK,} 2006.
\bibitem{Darek}Chru\'sci\'nski, D. Geometric aspects of quantum mechanics and quantum entanglement. {\em J. Phys. Conf. Ser.\/} {\bf 2006}, {\em 30}, 9.

\bibitem{activ}Kunc, V.; Kl\'ema, J. Three decades of activations: A comprehensive survey of 400 activation functions for neural networks. {\emph{arXiv} \textbf{2024},} arXiv:2402.09092.

\bibitem{ACK2016}Aerts, D.; Czachor, M.; Kuna, M. Fourier transforms on Cantor sets: A study in non-Diophantine arithmetic and calculus. {\em Chaos Solitons Fractals\/} {\bf 2016}, {\em 91}, 461--468. 

\bibitem{MCFS2021}Czachor, M. Non-Newtonian mathematics instead of non-Newtonian physics: Dark matter and dark energy from a mismatch of arithmetics. {\em Found. Sci.} {\bf 2021}, {\em 26}, 75--95.

\bibitem[Czachor(2016)]{MC}Czachor, M. Relativity of arithmetic as a fundamental symmetry of physics. {\em Quantum Stud. Math. Found.} {\bf 2016}, {\em 3}, 123--133.
\bibitem{Maslov}Maslov, V.P. On a new superposition principle for optimization problems. {\em Uspekhi Mat. Nauk}  {\bf 1987},  {\em 42}, 43. (In Russian)
\bibitem{Volterra}Volterra, V.; Hostinsk\'y, B. {\it Op\'erations Infinit\'esimales Lin\'eaires. Applications Aux \'equations Diff\'erentielles et Fonctionnelles}; Gauthier-Villars: Paris,  {France}, 1938.
\bibitem{Rashevsky}Rashevsky, P.K. On the dogma of the natural numbers. {\em Uspekhi Mat. Nauk.} {\bf 1973}, {\em 28}, 243--246. (In Russian)
\bibitem[Burgin(1977)]{Burgin}
Burgin, M.S. Nonclassical models of the natural numbers.
 {\em Uspekhi Mat. Nauk} {\bf 1977}, {\em 32}, 209--210. {(In~Russian)}
\bibitem{Burgin'}
Burgin, M.S. {\em Non-Diophantine Arithmetics, or Is It Possible That 2 + 2 Is Not Equal to 4?\/}; Ukrainian Academy of Information Sciences: Kiev,  {Ukraine}, 1997. (In~Russian)
\bibitem{Burgin''}
Burgin, M.  Introduction to projective arithmetics.  \emph{arXiv} \textbf{2010}, arXiv:1010.3287. 

\bibitem[Burgin \& Meissner(2017)]{Burgin3}Burgin, M.; Meissner, G. $1+1=3$: Synergy arithmetics in economics. {\em Appl. Math.\/}  {\bf 2017,} {\em 8}, 133--134.

\bibitem[Benioff(2002)]{B2002}Benioff, P.  Towards a coherent theory of physics and mathematics.  {\em Found. Phys.} {\bf 2002}, {\em 32},  989--1029.
\bibitem{B2002'}Benioff, P.  Towards a coherent theory of physics and mathematics. The theory-experiment connection.  {\em Found. Phys.} {\bf 2005}, {\em 35},  1825--1856.
\bibitem{B2002''}Benioff, P.  Fiber bundle description of number scaling in gauge theory and geometry. {\em Quant. Stud.: Math. Found.} {\bf 2015}, {\em 2}, 289--313.
\bibitem{B2002'''}Benioff, P.   Space and time dependent scaling of numbers in mathematical structures: Effects on physical and geometric quantities. {\em Quant. Inf. Proc.} {\bf 2016}, {\em 15}, 1081--1102.
\bibitem{B2002''''}Benioff, P.   Effects of a scalar scaling field on quantum mechanics. {\em Quant. Inf. Proc.} {\bf 2016}, {\em 15}, 3005--3034.

\bibitem{Lad}Lad, F. Embedding Bayes' theorem in general learning rules: Connections between idealized behaviour and empirical research on learning. {\em Brit. J. Math. Stat. Psychol.\/} {\bf 1978}, {\em 31}, 113--125. 
\bibitem{cep}Childers, D.G.; Skinner, D.P.; Kemerait, R.C. The cepstrum: A guide to processing. {\em Proc. IEEE\/} {\bf 1977}, {\em 65}, 1428--1443.
\bibitem{cep1}Oppenheim, A.V.; Schafer, R.W. From frequency to quefrency: A history of the cepstrum.  {\em IEEE Signal Process. Mag.} {\bf 2004}, {\em 21}, 95.
\bibitem{Gangal}Parvate, A.; Gangal, A.D. Calculus on fractal subsets of real line. (I) Formulation. {\em Fractals} {\bf 2009}, {\em 17}, 53. \bibitem{Gangal'}Parvate, A.; Gangal, A.D. Calculus
 on fractal subsets of real line.  (II) Conjugacy with ordinary calculus. {\em Fractals} {\bf 2011}, {\em 19}, 271.

\bibitem{A4}Golmankhaneh, A.K.; Baleanu, D. New derivatives on the fractal subset of real line. {\em Entropy} {\bf 2016}, {\em 18}, 1.
\bibitem{A4'}Golmankhaneh, A.K.; Baleanu, D.  Non-local integrals and derivatives on fractal sets with applications.
{\em Open Phys.} {\bf 2016}, {\em 14}, 542--548.
\bibitem{A6}Golmankhaneh, A.K.; Tunc, C. On the Lipschitz condition in the fractal calculus. {\em Chaos Soliton Fract.} {\bf 2017}, {\em 95},
140--147. 

\bibitem{Alireza}Golmankhaneh, A.K. {\it Fractal Calculus and Its Applications: $F^\alpha$-calculus\/}; World Scientific: Singapore, 2022. 
\bibitem[Tsallis, Mendes \&Plastino(1998)]{TMP}Tsallis, C.; Mendes, R.; Plastino, A. The role of constraints within generalized
nonextensive statistics. {\em Physica A} {\bf 1998}, {\em 261}, 543--554.

\bibitem[Touchette(2002)]{Kan}Touchette, H.  When is a quantity additive, and when is it extensive? {\em Physica A} {\bf 2002}, {\em 305}, 84--88.
\bibitem[Nivanen, Le Mehaute \&Wang(2003)]{Nivanen} Nivanen, L.; Le Mehaute, A.; Wang, Q.A. Generalized algebra within a nonextensive statistics. {\em Rep. Math. Phys.} {\bf 2003}, {\em 52}, 437--444.

\bibitem[Kaniadakis(2001a)]{K1}Kaniadakis, G. Nonlinear kinetics underlying generalized statistics. {\em Physica A\/}
{\bf 2001}, {\em 296}, 405.
\bibitem{K1'}Kaniadakis, G. Theoretical
 foundations and mathematical formalism of the power-law tailed
statistical distributions. {\em Entropy\/} {\bf 2013}, {\em 15}, 3983--4010.
\bibitem{Meginniss}Meginniss, J.R. Non-Newtonian calculus applied to probability, utility, and Bayesian analysis. In {\em Proceedings of the Business and Economic Statistics Section\/}; American Statistical Association: Washington, {DC, USA}, 1980; pp. 405--410.
\bibitem{Carr1}Carr, P.; Cherubini, U. Option pricing generators. {\em Front. Math. Financ.} {\bf 2023}, {\em 2}, 150--169.
\bibitem{Carr2}Carr, P.; Cirillo, P. A pseudo-analytic generalization of the memoryless property for continuous random variables
and its use in pricing contingent claims. {\em R. Soc. Open Sci.} {\bf 2024}, {\em 11}, 231690.







\bibitem{Kuczma}Kuczma, M. {\it Functional Equations in a Single Variable\/}; Polish Scientific Publishers: Warszawa, {Poland,} 1968.
\bibitem{KCG}Kuczma, M.; Choczewski, B.; Ger, R. {\em Iterative Functional Equations\/}; Cambridge University Press: Cambridge, {UK,} 1990.


\end{thebibliography}
\end{document}